\newtheorem{thm}{Theorem}
\newtheorem{cor}[thm]{Corollary}
\newtheorem{lem}[thm]{Lemma}
\newtheorem{defn}[thm]{Definition}
\newcommand{\for}{\mbox{ for }}
\newcommand{\minpoly}{\mbox{minpoly}}
\newcommand{\dirsum}{\bigoplus}
\newcommand{\dsum}{\displaystyle\sum\limits}
\newcommand{\pmp}[1][q,b]{\boldmath{P}_{#1}} % probability of minpoly preservation
\newcommand{\pmpmin}[1][q,b]{\boldmath{P}_{#1}} % probability of minpoly preservation
\newcommand{\prank}[1][t]{\boldmath{Q}_{#1}} % probability of minpoly preservation
\def\algend{\ensuremath{\Box}}
\def\Prob{\mbox{Prob}}
\def\F{{\mathbb F}}
\def\K{{\mathbb K}}
\def\rank{{\mbox{rank}}}
\def\mod{{\mbox{mod}}}
\def\lcm{{\mbox{lcm}}}
\def\H{{\tt H}}
\title{
Probabilistic analysis of Wiedemann's algorithm for minimal polynomial
computation
\thanks{Research supported by National Science Foundation Grants CCF-1018063 and CCF-1016728}
}
\author{Gavin Harrison\\
Drexel University \and
Jeremy Johnson\\
Drexel University \and
B. David Saunders\\
University of Delaware, 
}
\date{{\small \doi{10.1016/j.jsc.2015.06.005}\\
\copyright\ 2015, Elsevier. Licensed under the Creative Commons Attribution-NonCommercial-NoDerivatives 4.0 International\\ \url{http://creativecommons.org/licenses/by-nc-nd/4.0/}}}
\begin{document}

\maketitle

\begin{abstract}
Blackbox algorithms for linear algebra problems start with 
projection of the sequence of powers of a matrix to a sequence of vectors (Lanczos), a sequence of scalars (Wiedemann) or a sequence of smaller matrices (block methods).  
Such algorithms usually depend on the minimal polynomial of the resulting sequence being that of the given matrix.  
Here
exact formulas are given 
for the probability that this occurs.  
They are based on the generalized Jordan normal form (direct sum of companion matrices of the elementary divisors) of the matrix.
Sharp bounds follow from this for matrices of unknown elementary divisors. 
The bounds are valid for all finite field sizes and show that a small blocking factor can give high probability of success for all cardinalities and matrix dimensions.
\end{abstract}

%\begin{keyword}
%matrix, minimal polynomial, finite field, block blackbox algorithm, probability
%\end{keyword}

\section{Introduction}
\label{introduction}
The minimal polynomial of a $n\times n$ matrix $A$ may be viewed as the minimal scalar generating polynomial of the linearly recurrent sequence of powers of $\bar{A} = (A^0, A^1, A^2, A^3, \ldots)$.
Wiedemann's algorithm \citep{Wiedemann86} projects the matrix sequence 
to a scalar sequence $s = (s_0, s_1, s_2, \ldots$), where $s_i = u^TA^iv$. %, \for i = 0,1,2,\ldots$, 
The vectors $u,v$ are chosen at random.  The algorithm continues by computing the minimal generating polynomial of $s$ which, with high probability, is the minimal polynomial of $A$.
Block Wiedemann algorithms \citep{Coppersmith94,EGGSV06,Kaltofen95,Vill97,Vill99} fatten $u^T$ to matrix $U$ having several rows and $v$ to a matrix $V$ having multiple columns, so that 
the projection is to a sequence of smaller matrices, $B = U\bar{A}V = (UA^0V, UA^1V, UA^2V, \ldots)$, 
where, for chosen block size $b$, 
$U,V$ are uniformly random matrices of shape $b\times n$ and $n\times b$, respectively.
A block Berlekamp/Massey algorithm is then used to compute the matrix minimal generating polynomial of B \citep{Yuhasz13,GJV03}, and from it the minimal scalar generating polynomial.
All of the algorithms based on these random projections rely on preservation 
of some properties, including at least the minimal generating polynomial. 
In this paper we analyze the probability of preservation of minimum polynomial under random projections for a matrix over a finite field.

Let $A \in \F_q^{n \times n}$ %where $\F = \F_q$ 
and let {\bf $\pmp(A)$} denote the 
probability that $\minpoly(A) = \minpoly(U\bar AV)$ for uniformly random $U \in \F_q^{b\times n}$ and $V \in \F_q^{n\times b}$.
$\pmp(A)$ is the focus of this paper and this notation will be used throughout.
Our analysis proceeds by first giving exact formulas for $\pmp(A)$ in terms of field cardinality $q$, projected dimension $b$, and the elementary divisors of $A$.  
Let $\pmpmin(n) = \min(\{ \pmp(A)~|~{A \in \F_q^{n \times n}}\})$,
a function of field cardinality, $q$, projected block size, $b$, and the matrix dimension, $n$.
Building from our formula for $\pmp(A)$, we 
give a means to compute $\pmpmin(n)$ precisely and hence to derive a sharp lower bound.
Our bound is less pessimistic than earlier ones such as \citep{Kaltofen:1991:SSLS,Kaltofen95} which primarily apply when the field is large.

Even for cardinality 2, we show that a modest block size (such as b = 22) 
assures high probability of preserving the minimal polynomial. 
A key observation is that when the cardinality is small the number 
of low degree irreducible polynomials is also small.  \citet{Wiedemann86} used this observation to make a bound for probability of minimal polynomial preservation in the non-blocked algorithm.  Here, we have exact formulas for $\pmp(A)$ which are worst when the irreducibles in the elementary divisors of $A$ are as small as possible. Combining that with information on the number of low degree irreducibles, we obtain
a sharp lower bound for the probability of minimal polynomial preservation for arbitrary $n\times n$ matrix (when the elementary divisors are not known a priori).

Every square matrix, $A$, over a finite field $\F$ is similar over $\F$ to its generalized Jordan normal form, $J(A)$, a block diagonal direct sum of the Jordan blocks of its elementary divisors, which are powers of irreducible polynomials in $\F[x]$.  $A$ and $J(A)$ have the same distribution of random projections.  Thus we may focus attention on matrices in Jordan form.
After section \ref{defs} on basic definitions concerning matrix structure and linear recurrent sequences, 
the central result, theorem \ref{thm:prob1} is the culmination of section \ref{sec:prob} where probability of preserving the minimal polynomial for a matrix of given elementary divisors is analyzed.  Examples immediately following theorem \ref{thm:prob1} illustrate the key issues.  The exact formulation of the probability of minimal polynomial preservation in terms of matrix, field, and block sizes is our main result, theorem \ref{thm:worst}, in section \ref{sec:worst}. 
It's corollaries provide some simplified bounds. Section \ref{examplebounds}, specifically figure \ref{probability_plot}, illustrates practical applicability.
We finish with concluding remarks, section \ref{conclusion}.

\section{Definitions and Jordan blocks}\label{defs}
Let 
$\F^{m\times n}$ be the vector space of $m\times n$ matrices over $\F$, and $\F_\infty^{m\times n}$ the vector space of sequences of ${m\times n}$ matrices over $\F$. 
For a sequence $S = (S_0,S_1,S_2,...) \in \F_\infty^{m\times n}$ and polynomial $f(x) = \sum_{i=0}^d f_ix^i \in \F[x]$, define $f(S)$ as the sequence whose $k$-th term is $\sum_{i=o}^d f_i S_{i+k}$.  This action is a multiplicative group action of $\F[x]$ on $\F_\infty^{m\times n}$, because $(fg)(S) = f(g(S))$ for $f,g \in \F[x]$ and $f(S+\alpha T) = f(S) + \alpha f(T)$ for $S, T \in \F_\infty^{m\times n}$ and $\alpha \in \F$.
Further, if %for given sequence $S$, if for some $f \in \F[x]$, we have 
$f(S) = 0$ we say $f$ {\em annihilates} $S$. 
In this case, $S$ is completely determined by $f$ and its leading $d$ coefficient matrices $S_0, S_1, \ldots, S_{d-1}$.
Then $S$ is said to be {\em linearly generated}, and $f(x)$ is also called a {\em generator} of $S$.
Moreover, for given $S$, the set of polynomials that generate $S$ is an ideal of $\F[x]$.  
Its unique monic generator is called the {\em minimal generating polynomial}, or just {\em minimal polynomial} of $S$ and is denoted $\minpoly(S)$. 
In particular, the ideal of the whole of $\F[x]$ is generated by 1 and, acting on sequences, generates only the zero sequence.
For a square matrix $A$,
the minimal polynomial of the sequence 
$\bar{A} = (I, A, A^2, \ldots)$ is also called the minimal polynomial of $A$.
$(\minpoly(A) = \minpoly(\bar A))$.

We will consider the natural transforms of sequences by matrix multiplication on either side.
For $U \in \F^{b\times m},$ 
$US = (US_0, US_1, US_2, \ldots)$ over $\F^{b\times n},$
and for $V \in \F^{n\times b},$ 
$SV = (S_0V, S_1V, S_2V, \ldots)$ over $\F^{m\times b}.$
For any polynomial $g$, it follows from the definitions that $g(USV) = Ug(S)V$.
It is easy to see that the generators of $S$ also generate $US$ and $SV$, so that
$\minpoly(US)~|~\minpoly(S),$ and
$\minpoly(USV)~|~\minpoly(SV)~|~\minpoly(S).$

More specifically, we are concerned with random projections,
$U\bar{A}V$, 
 of a square matrix $A$, 
where $U,V$ are uniformly random, $U \in \F^{b\times n},V \in \F^{n\times b}$.  
By {\em uniformly} random, we mean that each of the (finitely many) matrices of the given shape is equally likely.
\begin{lem}
\label{lem:sim}
Let $A,B$ be similar square matrices over $\F_q$ and let $b$ be any block size.
Then $\pmp(A) = \pmp(B)$.  In particular, $\pmp(A) = \pmp(J)$ where $J$ is the generalized Jordan form of $A$.
\end{lem}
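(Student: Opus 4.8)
The plan is to exploit the fact that similarity is realized by conjugation with an invertible matrix, together with the observation that left (resp.\ right) multiplication by an invertible matrix is a measure-preserving bijection on the space of projection matrices. Write $B = P^{-1}AP$ for some invertible $P \in \F_q^{n\times n}$. The first step is to relate the two projected sequences termwise. Since $B^i = P^{-1}A^iP$, for any $U \in \F_q^{b\times n}$ and $V \in \F_q^{n\times b}$ we have $UB^iV = (UP^{-1})A^i(PV)$, and hence $U\bar{B}V = U'\bar{A}V'$ where $U' = UP^{-1}$ and $V' = PV$. The earlier identity $g(USV) = Ug(S)V$ could be invoked to make this fully formal, but it is not strictly needed, since the two projected sequences are literally equal.

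The second step is the change of variables. Because $P$ is invertible, the map $(U,V) \mapsto (UP^{-1},\, PV)$ is a bijection of $\F_q^{b\times n}\times\F_q^{n\times b}$ onto itself: right multiplication by $P^{-1}$ permutes $\F_q^{b\times n}$ and left multiplication by $P$ permutes $\F_q^{n\times b}$. Consequently this map carries the uniform distribution to the uniform distribution, so that $(U,V)$ uniformly random implies $(U',V')$ uniformly random.

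The third step combines these with the invariance of the minimal polynomial under similarity, $\minpoly(A) = \minpoly(B)$. Under the bijection above, the defining event for $B$, namely $\{\minpoly(U\bar{B}V) = \minpoly(B)\}$, is identified with $\{\minpoly(U'\bar{A}V') = \minpoly(A)\}$, the defining event for $A$. Since the bijection preserves uniform probability, these two events are equiprobable, giving $\pmp(B) = \pmp(A)$. The ``in particular'' clause is then immediate: every square matrix is similar to its generalized Jordan form $J$, so $\pmp(A) = \pmp(J)$.

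I do not anticipate a genuine obstacle; the argument is a routine transport of structure. The only point requiring a word of care is verifying that the substitution is a bijection of the finite sample space, and hence preserves the uniform measure, which is exactly where invertibility of $P$ is used---the same hypothesis that makes the minimal polynomials of $A$ and $B$ coincide.
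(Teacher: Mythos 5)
Your proof is correct and follows essentially the same route as the paper's: write the similarity as a conjugation, absorb the conjugating matrix into the projection matrices via $UB^iV = (UP^{-1})A^i(PV)$, and observe that $(U,V)\mapsto(UP^{-1},PV)$ is a bijection of the finite sample space and hence preserves the uniform distribution. The only difference is that you also spell out the (implicit in the paper) fact that $\minpoly(A)=\minpoly(B)$, which is a reasonable bit of added care.
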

\begin{proof}
Suppose $A$ and $B$ are similar, so that $B = WAW^{-1}$, 
for a nonsingular matrix $W$. The $(U,V)$ projection of $WAW^{-1}$ is the $(UW,W^{-1}V)$ projection of $A$. But when $U,V$ are uniformly random variables, then so are $UW$ and $W^{-1}V$, since the multiplications by $W$ and $W^{-1}$ are bijections.
\end{proof}

Thus, without loss of generality, in the rest of the paper we will restrict attention to matrices in generalized Jordan normal form.  We describe our notation for Jordan forms next.

The {\em companion matrix} of a monic polynomial $f(x) = f_0 + f_1x + \ldots + f_{d-1}x^{d-1} + x^d$ of degree $d$ is 
\[ C_f = 
\left( \begin{matrix}
		0 & 0 & 0 & \cdots & 0 & -f_0 \\
		1 & 0 & 0 & \cdots & 0 & -f_1 \\
		0 & 1 & 0 & \cdots & 0 & -f_2 \\
		\vdots & \vdots & \vdots & \ddots & \vdots & \vdots \\
		0 & 0 & 0 & \cdots & 1 & -f_{d-1}
	\end{matrix}
\right)
\mbox{ and }
J_{f^e} = 
\left( \begin{matrix}
		C_f & 0 & 0 & \cdots & 0 & 0 \\
		I & C_f & 0 & \cdots & 0 & 0 \\
		0 & I & C_f & \cdots & 0 & 0 \\
		\vdots & \vdots & \vdots & \ddots & \vdots & \vdots \\
		0 & 0 & 0 & \cdots & I & C_f
	\end{matrix}
\right)
\]
is the {\em Jordan block} corresponding to $f^e$, a $de\times de$ matrix. 
It is standard knowledge that the minimal polynomial of $J_{f^e}$ is $f^e$.
When $e = 1$, $J_f = C_f$.

In particular, we use these basic linear algebra facts: For irreducible $f$, (1) $f^{e-1}(J_{f^e})$ is zero everywhere except in the lowest leftmost block where it is a nonsingular polynomial in $C_f$ (see, for example, \cite{Robinson70}), and (2) the Krylov matrix $K_{C_f}(v) = (v, C_fv, C_f^2v, \ldots, C_f^{d-1}v)$ is nonsingular unless $v = 0$.

Generalized Jordan normal forms are (block diagonal) direct sums of primary components, 
$$J = \dirsum_i \dirsum_j J_{f_i^{e_{i,j}}},$$ where the $f_i$ are distinct irreducibles and the $e_{i,j}$ are positive exponents, nonincreasing with respect to $j$.
Every matrix is similar to a generalized Jordan normal form, unique up to order of blocks.

\section{Probability Computation, Matrix of Given\\ Structure}
\label{sec:prob}

Recall our definition that, for $A \in \F_q^{n \times n}$, %where $\F = \F_q$ 
{\bf $\pmp(A)$} denotes the 
probability that minimal polynomial is preserved under projection to $b\times b$, i.e., $\minpoly(A) = \minpoly(U\bar AV)$ for uniformly random $U \in \F_q^{b\times n}$ and $V \in \F_q^{n\times b}$.
For the results of this paper the characteristic of the field is not important.  
However the cardinality $q$ is a key parameter in the results.
For simplicity, we are restricting to projection to square blocks. 
It is straightforward to adjust these formulas to the case of rectangular blocking.

By lemma \ref{lem:sim}, we may assume that the given matrix is in generalized Jordan form, which is a block diagonal matrix.
The projections of a block diagonal matrix are sums of independent projections 
of the blocks.  In other words, for the $U,V$ projection of $A = \dirsum A_i$ let $U_i, V_i$ be the blocks of columns of $U$ and rows of $V$ conformal with the block sizes of the $A_i$.  Then $U\bar{A}V = \sum U_i\bar{A_i}V_i$.  In additionto this observation the particular structure of the Jordan form is utilized.

In subsection \ref{toprimarycomp}  
we show that the probability $\pmp(A)$
may be expressed in terms of $\pmp(J(f))$ for 
the primary components, $J(f) = \dirsum_j J_{f^{e_j}}$, 
associated with the distinct irreducible factors of the minimal polynomial of $A$.
This is further reduced to the probability for a direct sum of companion matrices $C_f$ in \ref{todirectsum}. 
Finally, the probability for $\dirsum C_f$ is calculated in \ref{adirectsum} by reducing it %the problem 
to the probability that a sum of rank 1 matrices over the extension field $\F_q[x]/\langle f(x) \rangle$ is zero.  
In consequence we obtain a formula for $\pmp(A)$ in theorem \ref{thm:prob1}.
Examples Examples illustrating theorem \ref{thm:prob1} are given in 
subsection \ref{sec:examples}.

\subsection{Reduction to Primary Components}\label{toprimarycomp}

Let $A = \dirsum_i \dirsum_j J_{f_i^{e_{i,j}}} \in \F_q^{n \times n},$ where the $f_i \in \F_q[x]$ are distinct irreducible polynomials and the $e_{i,j}$ are positive exponents, nonincreasing with respect to $j$.  In this section, we show that $$\pmp(A) = \prod_i \pmp\left(\dirsum_j J_{f_i^{e_{i,j}}}\right).$$

\begin{lem}
\label{lem:minpolysum}
Let $S$ and $T$ be linearly generated matrix sequences.
Then $\minpoly(S + T) ~|~ \lcm(\minpoly(S),\minpoly(T))$.  
\end{lem}
\begin{proof}
Let $f = \minpoly(S)$, $g = \minpoly(T)$ and $d = \gcd(f,g)$.  
The lemma follows from the observation that 
\[
(fg/d)(S+T)  =  (fg/d)(S) + (fg/d)(T) 
                    =  (g/d)(f(S)) + (f/d)(g(T)) = 0.
\]
\end{proof}

As an immediate corollary we get equality when $f$ and $g$ are relatively prime.

\begin{cor}
\label{cor:minpolysumrelprime}
Let $S$ and $T$ be linearly generated matrix sequences with $f = \minpoly(S)$ and $g = \minpoly(T)$ such that $\gcd(f,g) = 1$.  Then $\minpoly(S+T) = fg$.
\end{cor}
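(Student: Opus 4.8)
The plan is to obtain the two divisibility relations $\minpoly(S+T) \mid fg$ and $fg \mid \minpoly(S+T)$ and then conclude equality, since both polynomials are monic. The forward direction is immediate: Lemma \ref{lem:minpolysum} gives $\minpoly(S+T) \mid \lcm(f,g)$, and the hypothesis $\gcd(f,g)=1$ makes $\lcm(f,g)=fg$, so $\minpoly(S+T) \mid fg$.

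For the reverse direction, I would write $h = \minpoly(S+T)$. Since the action is additive, $0 = h(S+T) = h(S) + h(T)$, hence $h(S) = -h(T)$. The idea is to ``clear'' one summand at a time, exploiting that $f$ annihilates $S$ and $g$ annihilates $T$. Applying $g$ to both sides and using $(gh)(S) = g(h(S))$ together with $g(h(T)) = h(g(T)) = h(0) = 0$ (because $g(T)=0$), I get $(gh)(S) = 0$. Thus $gh$ annihilates $S$, so $f = \minpoly(S)$ divides $gh$; since $\gcd(f,g)=1$, this forces $f \mid h$. By the symmetric computation---applying $f$ and using $f(S)=0$---I obtain $g \mid h$. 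Coprimality of $f$ and $g$ then yields $fg \mid h$.

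The only point requiring care is keeping the multiplicative group action straight: the argument rests on the identities $(gh)(S) = g(h(S))$ and $g(h(T)) = h(g(T))$, i.e.\ on the fact that composing the actions corresponds to multiplying the polynomials and that this multiplication is commutative. I do not anticipate a genuine obstacle here; once the relation $h(S) = -h(T)$ is combined with the annihilation of $S$ by $f$ and of $T$ by $g$, both divisibilities fall out, and monicity of $h$, $f$, and $g$ upgrades $fg \mid h \mid fg$ to the desired equality $\minpoly(S+T) = fg$.
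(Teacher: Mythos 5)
Your proof is correct and takes essentially the same approach as the paper: both obtain $\minpoly(S+T) \mid fg$ from Lemma \ref{lem:minpolysum}, then prove the reverse divisibility by multiplying through by a polynomial that annihilates one summand and invoking coprimality. The paper phrases this step as a contradiction on the factored form $f_1g_1$ of the minimal polynomial (showing $fg_1$ annihilates $T$, forcing $g_1 = g$), while you derive $f \mid h$ and $g \mid h$ directly from $h(S) = -h(T)$, but the underlying mechanism is identical.
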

\begin{proof}
By the previous lemma, $\minpoly(S+T) = f_1g_1$ 
with $f_1~|~f$ and $g_1~|~g$.  We show that $f_1 = f$ and $g_1 = g$.  Under
our assumptions, 
$ 0 = fg_1(S+T)  =  fg_1(S) + fg_1(T) = fg_1(T)$ so that $fg_1$ is a generator of $T$.  But if $g_1$ is
a proper divisor of $g$, then $fg_1$ is not in the ideal generated by $g$, a contradiction.
Similarly $f_1$ must equal $f$.
\end{proof}

\begin{thm}
\label{thm:minpolysum}
Let $A = \dirsum_i \dirsum_j J_{f_i^{e_{i,j}}} \in \F_q^{n \times n},$ where the $f_i$ are distinct irreducibles and the $e_{i,j}$ are positive exponents, nonincreasing with respect to $j$.  Then, $\pmp(A) =\prod_i \pmp\left(\dirsum_j J_{f_i^{e_{i,j}}}\right)$.
\end{thm}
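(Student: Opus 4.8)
The plan is to reduce the event defining $\pmp(A)$ to a conjunction of independent events, one per primary component, and then factor the probability. First I would write the projection as a sum over primary components. With $U_i,V_i$ the blocks of columns of $U$ and rows of $V$ conformal to the block $A_i = \dirsum_j J_{f_i^{e_{i,j}}}$, the observation recorded just before this subsection gives $U\bar{A}V = \sum_i S_i$ where $S_i := U_i\bar{A_i}V_i$. Because the exponents are nonincreasing in $j$ we have $\minpoly(A_i) = f_i^{e_{i,1}}$, and since $\minpoly(S_i) \mid \minpoly(\bar{A_i})$ each $\minpoly(S_i)$ is a power $f_i^{k_i}$ with $0 \le k_i \le e_{i,1}$; as the $f_i$ are distinct irreducibles, these minimal polynomials are pairwise relatively prime.

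Next I would apply the multi-summand version of Corollary \ref{cor:minpolysumrelprime}, obtained by a routine induction on the number of components (at each step $\minpoly(S_1 + \cdots + S_{m-1}) = \prod_{i<m} f_i^{k_i}$ is coprime to $f_m^{k_m}$). This yields
\[
\minpoly(U\bar{A}V) = \minpoly\Big(\sum_i S_i\Big) = \prod_i \minpoly(S_i).
\]
Since $\minpoly(A) = \prod_i f_i^{e_{i,1}}$ and each factor $\minpoly(S_i) = f_i^{k_i}$ divides $f_i^{e_{i,1}}$, coprimality forces the crucial equivalence: the global equality $\minpoly(U\bar{A}V) = \minpoly(A)$ holds if and only if $\minpoly(S_i) = f_i^{e_{i,1}} = \minpoly(A_i)$ for \emph{every} $i$ simultaneously.

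Finally I would invoke independence to factor the probability. The event $\{\minpoly(S_i) = \minpoly(A_i)\}$ depends only on the pair $(U_i,V_i)$, and these pairs are read off from disjoint blocks of the uniformly random $U$ and $V$; hence they are mutually independent and each $(U_i,V_i)$ is itself uniformly random of shape $b\times n_i$ and $n_i\times b$. Consequently the probability of the global event $\pmp(A)$ equals the product over $i$ of $\Prob[\minpoly(S_i) = \minpoly(A_i)]$, and this last probability is exactly $\pmp\left(\dirsum_j J_{f_i^{e_{i,j}}}\right)$ by definition.

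I expect the main obstacle to be the equivalence in the middle step: establishing that preservation of the whole minimal polynomial is equivalent to simultaneous preservation on every primary component. This rests on combining two facts with care, namely that the projected minimal polynomials always multiply to give $\minpoly(U\bar{A}V)$ (the extended corollary), and that each is forced to be a power of the corresponding $f_i$ bounded by $e_{i,1}$ (the divisibility $\minpoly(USV) \mid \minpoly(S)$ together with irreducibility of $f_i$). Once that equivalence is in place, the independence argument and the final factorization are essentially bookkeeping.
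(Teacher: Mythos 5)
Your proposal is correct and follows essentially the same route as the paper's own proof: decompose $U\bar{A}V = \sum_i S_i$ over the primary components, use $\minpoly(S_i) \mid f_i^{e_{i,1}}$ plus distinctness of the $f_i$ to get pairwise coprimality, apply corollary \ref{cor:minpolysumrelprime} to conclude $\minpoly(U\bar{A}V) = \prod_i \minpoly(S_i)$, and deduce the equivalence with simultaneous preservation on each component. Your version is slightly more explicit than the paper's in two places it leaves implicit --- the induction extending the corollary to several summands, and the independence of the disjoint blocks $(U_i,V_i)$ justifying the factorization of the probability --- but these are elaborations of the same argument, not a different one.
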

\begin{proof}
Let $S = U\bar{A}V$, and $S_i = U_i \dirsum_j J_{f_i^{e_{i,j}}} V_i$, where $U_i,V_i$ are blocks of $U,V$ conforming to the dimensions of the blocks of $A$.  Then, $S = \sum_i S_i$.  Let $g_i = \minpoly\left(S_i\right)$.  Because $g_i ~|~ f_i^{e_{i,1}}$ and all $f_i$ are unique irreducibles, then $gcd(g_i,g_j) = 1$ when $i \neq j$.  Therefore, by corollary \ref{cor:minpolysumrelprime}, $\minpoly(S) = \prod_i g_i$.  Therefore $\minpoly(S) = \minpoly(A)$ if and only if $\minpoly(S_i) = f_i^{e_{i,1}}$ for all $i$, and $\pmp(A) = \prod_i \pmp\left( \dirsum_j J_{f_i^{e_{i,j}}} \right)$.
\end{proof}

\subsection{Probability for a Primary Component}\label{aprimecomponent}

Next we calculate $\pmp(\dirsum{J_{f^{e_i}}})$, where $f \in \F_q[x]$ is an irreducible polynomial and $e_i$ are positive integers.  We begin with the case of a single Jordan block before moving on to the case of a direct sum of several blocks.

Consider the Jordan block $J \in \F_q^{n \times n}$ determined by an irreducible power, $f^e$.  $\pmp(J)$ is independent of $e$.  Thus, $\pmp(J_{f^e}) = \pmp(C_f)$. This fact and $\pmp(C_f)$ are the subject of the next lemma.

\begin{thm}
\label{thm:block}
Given a finite field $\F_q$, an irreducible polynomial $f(x) \in \F_q[x]$ of degree $d$,
an exponent $e$, and a block size $b$, let $J = J_{f^e} \in \F_q^{de\times de}$ %(where $N = de$) 
be the Jordan block of $f^e$ and 
let $\bar{J}$ be the sequence  $(I, J, J^2, \ldots)$. %and  $C_f$, the companion matrix of $f$. 
For $U \in \F_q^{b\times de}$ and $V \in \F_q^{de\times b}$ 
	the following properties of minimal polynomials hold.
\begin{enumerate}
\item
If the entries of $V$ are uniformly random in $\F_q$, then
$$\Prob(f^e = \minpoly(\bar J V)) = 1 - 1/q^{db}.$$
Note that the probability is independent of $e$.
\item
If $V$ is fixed and the entries of $U$ are uniformly random in $\F_q$, then
\[
\Prob(\minpoly(\bar J V) = \minpoly( U \bar J V))  \geq 1 - 1/q^{db},
\]
with equality if $V \neq 0$.
\item
If $U$ and $V$ are both uniformly random, then
$$\pmp(J) = \Prob(f^e = \minpoly(U \bar J V)) 
= (1 - 1/q^{db})^2 = \pmp(C_f).
$$
\end{enumerate}
\end{thm}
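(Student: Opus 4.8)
The plan is to establish the three claims in sequence, since each builds on the previous. Throughout I would work in the extension field $\K = \F_q[x]/\langle f(x)\rangle$, which has cardinality $q^d$, and exploit the two stated linear algebra facts about Jordan blocks: that $f^{e-1}(J_{f^e})$ is nonzero only in the lowest-left block where it is a nonsingular polynomial in $C_f$, and that the Krylov matrix $K_{C_f}(v)$ is nonsingular exactly when $v\neq 0$.

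For part (1), the key observation is that $\minpoly(\bar JV) = f^e$ fails precisely when $\minpoly(\bar JV)$ is a proper divisor of $f^e$, i.e.\ divides $f^{e-1}$. Since $f$ is irreducible, this is equivalent to $f^{e-1}(\bar JV) = 0$, that is, $f^{e-1}(J)V = 0$. By the first stated fact, $f^{e-1}(J)$ has a single nonzero block, a nonsingular matrix $M = g(C_f)$ sitting in the lower-left $d\times d$ position, so $f^{e-1}(J)V = 0$ reduces to $M V' = 0$ where $V'$ is the top $d$ rows of $V$; nonsingularity of $M$ makes this equivalent to $V' = 0$. The entries of $V'$ form a uniformly random element of $\F_q^{d\times b}$, which is zero with probability $1/q^{db}$. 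This gives the $1 - 1/q^{db}$ figure, manifestly independent of $e$, and it is this reduction that also justifies the earlier claim $\pmp(J_{f^e}) = \pmp(C_f)$.

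For part (2), I would fix $V$ and condition on whether $V = 0$. When $V = 0$ both minimal polynomials are $1$, so the event holds with probability $1$, consistent with the inequality. When $V\neq 0$, write $W = \bar JV$ and note $\minpoly(UW)\mid\minpoly(W)$ always; the event fails only when the minimal polynomial drops, again (by irreducibility of $f$) exactly when $h(UW) = 0$ for $h = \minpoly(W)/f$, i.e.\ $U\,h(W) = 0$ with $h(W)\neq 0$. I expect the heart of the argument to be showing that the left-multiplication-by-$U$ map kills a fixed nonzero matrix $h(W)$ with probability exactly $1/q^{db}$; this follows because $U h(W) = 0$ forces the rows of $U$ into the left nullspace of $h(W)$, and analysing the $\K$-structure of $h(W)$ (whose relevant nonzero part again factors through a Krylov/companion-nonsingularity argument, giving effectively rank one over $\K$) pins the probability to $q^{-db}$. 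The main obstacle is precisely this rank computation over $\K$: one must verify that the surviving block of $h(W)$ has full $\K$-row-rank on its nonzero part so that the nullspace condition on each row of $U$ contributes a clean factor of $q^{-d}$ per row, for $q^{-db}$ total.

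For part (3), I would combine the two previous parts by conditioning on $V$ and using independence of $U$ and $V$. Writing the target probability as a sum over $V$ of $\Prob(\minpoly(\bar JV) = f^e)\cdot\Prob(\minpoly(U\bar JV) = \minpoly(\bar JV)\mid V)$, the event $\minpoly(U\bar JV) = f^e$ decomposes as the conjunction of $\minpoly(\bar JV) = f^e$ and $\minpoly(U\bar JV) = \minpoly(\bar JV)$. On the set where $\minpoly(\bar JV) = f^e$ we necessarily have $V\neq 0$, so part (2) applies with equality and contributes the factor $1 - 1/q^{db}$; part (1) supplies the other factor $1 - 1/q^{db}$. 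Multiplying gives $(1 - 1/q^{db})^2$, and since $C_f = J_{f^1}$ is the case $e = 1$, this equals $\pmp(C_f)$. I would take care to confirm that on the relevant event the equality case of part (2) genuinely holds (i.e.\ that $V\neq 0$ there), which is the one place where the conditioning must be handled cleanly rather than by a blanket bound.
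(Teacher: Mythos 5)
Your proposal is correct and, for parts 1 and 3, essentially coincides with the paper's proof: part 1 rests on the nonsingular lower-left block $M$ of $f^{e-1}(J)$, and part 3 conditions on the event of part 1 (which forces $V\neq 0$) and invokes the equality case of part 2, exactly as the paper does. The difference worth recording is in part 2. The paper again multiplies by $f^{e-1}$: the sequence $Uf^{e-1}(\bar J)V$ is supported on the lower-left $d\times d$ corner, where it equals $U_e\bar{C_f}MV_1$, and the Krylov-nonsingularity argument shows only $U_e=0$ kills it. Strictly, that reduction settles the case $V_1\neq 0$, i.e.\ $\minpoly(\bar JV)=f^e$; the claimed equality for \emph{every} $V\neq 0$ also needs the case $\minpoly(\bar JV)=f^{e'}$ with $0<e'<e$. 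Your formulation via $h=\minpoly(\bar JV)/f$ is the statement one actually needs there, and the step you flag as the ``main obstacle'' --- that killing the fixed nonzero sequence $h(\bar JV)$ costs probability exactly $q^{-db}$ --- closes by precisely the tool you name. Concretely, $\minpoly(\bar JV)=f^{e'}$ forces $V_1=\cdots=V_{e-e'}=0$ and $V_{e-e'+1}\neq 0$ (the blocks of $f^{m}(J)$ on its $m$-th block subdiagonal are nonsingular polynomials in $C_f$, by the same fact that makes $M$ nonsingular); hence every term of $h(\bar JV)=f^{e'-1}(\bar J)V$ vanishes outside the bottom $d$ rows, where the sequence is $\bar{C_f}Z$ for a fixed $Z\neq 0$, and nonsingularity of the Krylov matrix of a nonzero column of $Z$ makes the vanishing condition on each row of $U$ exactly $d$ independent linear constraints, giving failure probability exactly $q^{-db}$. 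So you follow the paper's route, but once your flagged step is written out as above, your argument establishes the part-2 equality in the full generality stated, a case the paper's reduction to the companion matrix passes over implicitly.
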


\begin{proof}
For parts 1 and 2, 
let $M$ be the lower left $d \times d$ block of $f^{e-1}(J)$.  $M$ is nonzero and all other parts of $f^{e-1}(J)$ are zero.  
Note that $\F_q[C_f]$, the set of polynomials in the companion matrix $C_f$, is isomorphic to $\F_q[x]/\langle f \rangle$.
Since $M$ is nonzero and a polynomial in $C_f$, it is nonsingular.  
Since for any polynomial $g$ and matrix $A$ one has $g(\bar A) = \bar Ag(A)$, 
the lower left blocks of the sequence $f^{e-1}(\bar J)$ form the sequence $(M, C_fM, C_f^2M, \ldots) = \bar C_fM$.  

Part 1.  $f^{e-1}(\bar J) V$ is zero except in its lower $d$ rows which are $\bar {C_f} MV_1$,
where $V_1$ is the top $d$ rows of $V$.
This sequence is nonzero with minimal polynomial $f$ unless $V_1 = 0$ which has probability $1/q^{db}$.

Part 2.
If $V = 0$ the inequality is trivially true.  For $V \neq 0$, 
$Uf^{e-1}(\bar J) V$ is zero except in its lower left $d\times d$ corner $U_e \bar{C_f}M V_1$,  
where $V_1$ is the top $d$ rows of $V$ and $U_e$ is the rightmost $d$ columns of $U$.
Since $M$ is nonsingular, $M V_1$ is uniformly random and the question is reduced to the case of projecting a companion matrix.

Let $C = C_f$ for irreducible $f$ of degree $d$.
For nonzero $V \in \F^{d\times b}$, $\bar{C}V$ is nonzero and has minpoly $f$.  We must show that if $U \in \F^{b\times d}$ is nonzero then $U\bar{C}V$ also has minpoly $f$. 
Let $v$ be a nonzero column of $V$.
The Krylov matrix %of $M, v$ is 
$K_C(v) = (v, Cv, C^2v, \ldots, C^{d-1}v)$
has as it's columns the first $d$ vectors of the sequence $\bar{C}v$.
Since $v$ is nonzero, this Krylov matrix is nonsingular and 
$u K_C(v) = 0$ implies $u = 0$.
Thus, for any nonzero vector $u$, we have $u\bar {C} v \neq 0$ so that, for nonzero $U$, the sequence $U\bar C_f V$ is nonzero and has minimal polynomial $f$ as needed.
Of the $q^{db}$ possible $U$, 
only $U = 0$ fails to preserve the minimal polynomial. 

Part 3.
By parts 1 and 2,
we have $(1-1/q^{db})$ probability of preservation of minimum polynomial $f^e$, first at right reduction by $V$ to the sequence $\bar{J}V$ and then again the same probability at the reduction by $U$ to block sequence $U\bar{J}V$.  Therefore, $\pmp(J) = (1-1/q^{db})^2$.
\end{proof}

\subsubsection{Reduction to a Direct Sum of Companion Matrices}\label{todirectsum}

Consider the primary component $J = \dirsum{J_{f^{e_i}}}$,  
for irreducible $f$, and let $e = \max(e_i)$.  
We reduce the question of projections preserving minimal polynomial for $J$ to the corresponding question for direct sums of the companion matrix $C_f$, which is then addressed in the next section.

\begin{lem}
\label{lem:companion}
Let $J = \dirsum J_{f^{e_i}}$, where $f \in \F_q[x]$ is irreducible, and $e_i$ are positive integers.   Let $e = \max(e_i)$.  Let $s$ be the number of $e_i$ equal to $e$.  Then, $$\pmp(J) = \pmp\left(\dirsum_{i=1}^s C_f \right).$$
\end{lem}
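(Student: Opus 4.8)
The plan is to reduce the primary component $J = \bigoplus_i J_{f^{e_i}}$ to the direct sum of $s$ copies of $C_f$ by showing that the Jordan blocks with exponent $e_i < e$ contribute nothing to whether the minimal polynomial $f^e$ is preserved. The governing principle, established in Theorem~\ref{thm:block}, is that $\pmp(J_{f^{e_i}}) = \pmp(C_f)$ because the sole obstruction to preservation lives in a single $d\times d$ block extracted by multiplying through by $f^{e_i - 1}$. So the guiding intuition is that only the top-exponent blocks can ``carry'' the factor $f^e$, and among those, the preservation question is really about a sum of companion-matrix projections.

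First I would set $S = U\bar J V = \sum_i S_i$ where $S_i = U_i \bar{J_{f^{e_i}}} V_i$ is the projection of the $i$-th block, with $U_i,V_i$ the conformal column/row blocks of $U,V$. Since each $\minpoly(S_i) \mid f^{e_i}$ and $\minpoly(S) \mid f^e$, by Lemma~\ref{lem:minpolysum} the minimal polynomial of $S$ is $f^k$ where $k = \max_i \{ \text{the exponent to which } f \text{ divides } \minpoly(S_i)\}$. Preservation ($\minpoly(S) = f^e$) therefore happens exactly when $\minpoly(S_i) = f^e$ for at least one $i$ with $e_i = e$; the blocks with $e_i < e$ are incapable of reaching exponent $e$, so they are irrelevant to the event.

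The central reduction is then to show that, for a fixed block $J_{f^e}$, the event $\minpoly(S_i) = f^e$ coincides (in the relevant distributional sense) with the corresponding companion-matrix event. This is precisely the content of the proof of Theorem~\ref{thm:block}: multiplying the sequence $\bar{J_{f^e}}$ by $f^{e-1}$ annihilates everything except the lower-left $d\times d$ corner, which behaves as $\bar C_f M$ with $M$ nonsingular, and the nonsingularity of $M$ lets one absorb $M$ into the uniformly random $V$-block. Consequently $\minpoly(S_i) = f^e$ for the block $J_{f^e}$ is equiprobable with, and indeed identifiable with, $\minpoly(U_i \bar C_f V_i') = f$ for a companion projection with uniformly random $V_i'$. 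Since the top-exponent blocks are exactly the $s$ copies contributing companion-matrix summands, the joint event that their sum attains minimal polynomial $f$ matches the event for $\bigoplus_{i=1}^s C_f$, giving $\pmp(J) = \pmp\left(\bigoplus_{i=1}^s C_f\right)$.

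The main obstacle I anticipate is justifying that the lower-exponent blocks are not merely unable to \emph{raise} the exponent past what the top blocks achieve, but that they also cannot \emph{interfere destructively} — that is, I must confirm that the preservation event for $S$ depends only on the summands coming from the $s$ top-exponent blocks and is genuinely independent of the $U_i,V_i$ attached to the smaller blocks. The clean way to handle this is to apply $f^{e-1}$ to the whole sequence $S$: this kills every $S_i$ with $e_i < e$ entirely (since $f^{e-1}$ annihilates $\bar{J_{f^{e_i}}}$ when $e_i \le e-1$), leaving $f^{e-1}(S) = \sum_{i : e_i = e} f^{e-1}(S_i)$, a sum over exactly the $s$ top blocks. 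Preservation is then equivalent to this residual sum being nonzero with minimal polynomial $f$, which by the corner-extraction argument is exactly the companion-matrix sum preservation event, completing the reduction.
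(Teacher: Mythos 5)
Your proposal is correct and takes essentially the same route as the paper's own proof: apply $f^{e-1}$ to the whole projected sequence, observe that this annihilates every block with $e_i < e$, and use the corner-extraction fact that $f^{e-1}(\bar{J_{f^e}})$ is zero except for a lower-left block equal to $\bar{C_f}M$ with $M$ nonsingular, which can be absorbed into the uniformly random $V$-blocks. One caution: the claim in your second paragraph --- that preservation occurs exactly when at least one top-exponent block individually attains minimal polynomial $f^e$ --- is false as stated, since Lemma~\ref{lem:minpolysum} gives only divisibility (distinct summands can cancel destructively); however, your final paragraph correctly identifies this gap and replaces that claim with the nonvanishing of the residual sum $\sum_{i:\,e_i=e} f^{e-1}(S_i)$, which is precisely the paper's argument.
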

\begin{proof}
The minimal polynomial of $J$ is $f^e$ and that of $f^{e-1}(J)$ is $f$. A projection $U{\bar{J}}V$ preserves minimal polynomial $f^e$ if and only if $f^{e-1}(U{\bar{J}}V)$ has minimal polynomial $f$.
For all $e_i < e$ we have $f^{e-1}(J_{f^{e_i}}) = 0$, so it suffices to consider direct sums of Jordan blocks for a single (highest) power $f^e$.

Let $J_e = J_{f^e}$ be the Jordan block for $f^e$,
and let 
$A = \dirsum_{i=1}^s J_e$. 
A projection $U{\bar A}V$ is successful if it has the same minimal polynomial as $A$. 
This is the same as saying the minimal polynomial of $f^{e-1}(U{\bar A}V)$ is $f$.
We have 
\[
f^{e-1}(U{\bar A}V) = U f^{e-1}({\bar A}) V = \sum_{i=1}^s U_i f^{e-1}(\bar{J_e}) V_i = \sum_{i=1}^s U_{i,e} {\bar{C_f}} {\tilde V_{i,1}}.
\]  
For the last expression $U_{i,e}$ is the rightmost block of $U_i$ and ${\tilde V_{i,1}}$ is the top block of $MV_i$.  The equality follows from the observation in the proof of theorem \ref{thm:block} that $f^{e-1}(\bar J)$ is the sequence that has $\bar{C_f}M$ ($M$ nonsingular) in the lower left block and zero elsewhere.
Thus, $\pmp(J) = \pmp\left(\dirsum_{i=1}^s C_f \right)$.
\end{proof}

\subsubsection{Probability for a Direct Sum of Companion Matrices}\label{adirectsum}

Let $f$ be irreducible of degree $d$.
To determine the probability that a block projection of 
$A = \dirsum_{i=1}^t C_f$ 
preserves the minimal polynomial of $A$, we need to determine the probability 
that $\dsum_{i=1}^t U_i \bar{C_f} V_i = 0$.  We show that this is equivalent to the probability that a sum of rank one matrices over 
$\K = \F_q[x]/\langle f(x) \rangle$ is zero and we establish
a recurrence relation for this probability
in corollary \ref{recurrence}.
This may be considered the heart of the paper.

\begin{lem}
\label{thm:zero_sum_prob}
Let $A = \dirsum_{i=1}^t C_f \in \F_q^{n \times n}$, where $f \in \F_q[x]$ is irreducible of degree $d$.  $\pmp(A)$ is equal to the probability that $S = U \bar{A}V  = \dsum_{i=1}^t U_i \bar{C_f} V_i \neq 0$, where $U \in \F_q^{b \times n}$ and $V \in \F_q^{n \times b}$ are chosen uniformly randomly, and $U_i,V_i$ are blocks of $U,V$, respectively, conforming to the dimensions of the blocks of $A$.
\end{lem}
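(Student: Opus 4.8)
The plan is to reduce preservation of the minimal polynomial to a simple nonvanishing condition, exploiting the irreducibility of $f$. First I would observe that, since $A = \dirsum_{i=1}^t C_f$ is a direct sum of $t$ copies of $C_f$ and each copy has minimal polynomial $f$, the minimal polynomial of $A$ is $\lcm(f,\dots,f) = f$. Thus $\pmp(A)$ is the probability of the event $\minpoly(U\bar{A}V) = f$.

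Next, by the divisibility relations recorded in section \ref{defs}, $\minpoly(U\bar{A}V)$ divides $\minpoly(\bar{A}) = f$. Because $f$ is irreducible, its only monic divisors are $1$ and $f$; hence, writing $S = U\bar{A}V$, the polynomial $\minpoly(S)$ is either $1$ or $f$, with no intermediate possibility.

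The crux is the dichotomy between these two cases. The generators of any sequence form an ideal of $\F[x]$, and the unit ideal (generated by $1$) annihilates only the zero sequence, since the polynomial $1$ acts as the identity on sequences; therefore $\minpoly(S) = 1$ holds exactly when $S = 0$. Consequently $\minpoly(S) = f$ if and only if $S \neq 0$, i.e. the minimal polynomial is preserved precisely when the projected sequence is nonzero. The stated identity $S = \dsum_{i=1}^t U_i \bar{C_f} V_i$ is merely the block decomposition of the projection of a block-diagonal matrix noted at the opening of this section, with $U_i$ and $V_i$ the column and row blocks of $U$ and $V$ conforming to the blocks of $A$. Assembling these facts gives $\pmp(A) = \Prob(S \neq 0)$, as claimed.

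I expect no genuine technical obstacle here; the entire content lies in recognizing that irreducibility of $f$ collapses the candidate minimal polynomials of the projection to the two-element set $\{1,f\}$, after which \emph{minimal polynomial preserved} and \emph{projected sequence nonzero} are synonymous. The only point meriting a second glance is the claim that $S = 0$ as a full sequence coincides with $\minpoly(S) = 1$, and this is immediate from the ideal-theoretic description of generators.
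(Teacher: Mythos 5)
Your proposal is correct and follows the paper's own proof essentially verbatim: both arguments use $\minpoly(S)\,|\,\minpoly(A)=f$, invoke irreducibility to collapse the possibilities to $\{1,f\}$, and identify $\minpoly(S)=1$ with $S=0$. The only additions (the lcm justification that $\minpoly(A)=f$ and the block decomposition of $S$) are details the paper takes as given, so there is nothing further to flag.
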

\begin{proof}
Because $\minpoly(S) ~|~ \minpoly(A)$ and $\minpoly(A) = f$, then $\minpoly(S) ~|~ f$.  Because $f$ is irreducible, it has just two divisors: $f$ and $1$.  The divisor 1 generates only the zero sequence.  Therefore, if $S = 0$ then $\minpoly(S) = 1$.  Otherwise, $\minpoly(S) = f$.  Thus $\pmp(A)$ equals the probability that $S \neq 0$.
\end{proof}

The connection between sums of sequences $U \bar{C_f} V$ and sums of rank 
one matrices over the extension field $\K$ is obtained through the 
observation that for column vectors $u,v$, one has $u^T \bar{C_f} v = u^T \rho(v)$ where $\rho$ is the 
regular matrix representation of $\K$, i.e.
$\rho(v)u = v u$ in $\K$.
The vectors $u$ and $v$ can be interpreted as elements of $\K$ by associating
them with the polynomials $u(x) = \sum_{i=0}^{d-1} u_i x^i$ and 
$v(x) = \sum_{i=0}^{d-1} v_i x^i$.  Moreover,
if $\{1,x,x^2,\ldots,x^{d-1}\}$ is chosen as a basis for $\K$ over $\F$, 
then $\rho(x) = C_f$ and 
$\rho(v) = \sum_{i=0}^{d-1} v_i \rho(x)^i = \sum_{i=0}^{d-1} v_i C_f^i$.

Letting $C = C_f$, the initial segment of 
$u^T \bar{C_f} v$ is $u^T ( v, C v, C^2 v,\ldots, C^{d-1}v )$,
which is $u^T K_C(v)$, where $K_C(v)$ is the Krylov matrix whose columns are $C^iv$.  
The following lemma shows that $K_C(v) = \rho(v)$ and establishes the
connection $u^T \bar{C_f} v = u^T \rho(v)$.

\begin{lem}
\label{thm:krillov_reg}
Let $f$ be an irreducible polynomial and $\K = \F[x]/\langle f \rangle$ be the extension field defined by $f$.  Let $\rho$ be the regular representation of
$\K$ and $C = C_f$ the companion matrix of $f$. Then
$\rho(v) = \sum_{j=0}^{d-1} v_j C^j = K_C(v)$.
\end{lem}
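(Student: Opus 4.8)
The plan is to establish the two asserted equalities separately, both resting on the single observation that the companion matrix $C = C_f$ is exactly the matrix, relative to the ordered basis $\{1, x, x^2, \ldots, x^{d-1}\}$ of $\K$ over $\F$, of the multiplication-by-$x$ map; that is, $\rho(x) = C$. This is read off directly from the displayed form of $C_f$: its $k$-th column records the coordinates of $x \cdot x^k$ for $0 \le k \le d-2$ (namely $x^{k+1}$, giving the ones on the subdiagonal) and of $x \cdot x^{d-1} = x^d \equiv -(f_0 + f_1 x + \cdots + f_{d-1} x^{d-1}) \pmod f$ (giving the last column). Throughout I identify the polynomial $v = \sum_{i=0}^{d-1} v_i x^i$ with its coordinate column vector $(v_0, \ldots, v_{d-1})^T$, and write $e_0, \ldots, e_{d-1}$ for the standard basis vectors, so that $e_k$ is the coordinate vector of $x^k$ and $\rho(v) e_0 = v$ since $v \cdot 1 = v$.

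For the first equality I would use that $\rho$ is an $\F$-algebra homomorphism from $\K$ into $\F^{d \times d}$. Hence $\rho(v) = \rho\left(\sum_{j=0}^{d-1} v_j x^j\right) = \sum_{j=0}^{d-1} v_j \rho(x)^j = \sum_{j=0}^{d-1} v_j C^j$, which is the middle expression of the lemma; this step is essentially the remark already made in the text preceding the statement.

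For the second equality, $\sum_{j} v_j C^j = K_C(v)$, I would identify the two matrices column by column. The $k$-th column of $K_C(v)$ is by definition $C^k v$. Since $\rho(v) = \sum_j v_j C^j$ is a polynomial in $C$, it commutes with $C$, and $C^k e_0 = e_k$ for $0 \le k \le d-1$ (the shift property of the companion matrix, the same feature that produces the subdiagonal of ones, and which never invokes the problematic image of $e_{d-1}$). Therefore $C^k v = C^k \rho(v) e_0 = \rho(v) C^k e_0 = \rho(v) e_k$, which is precisely the $k$-th column of $\rho(v)$. As $k$ ranges over $0, \ldots, d-1$ this matches every column, so $K_C(v) = \rho(v)$, completing the chain $\rho(v) = \sum_j v_j C^j = K_C(v)$.

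The only step requiring genuine care is the base identity $\rho(x) = C$; once it is in hand, everything else is formal, resting on the homomorphism property and on the commutativity and shift bookkeeping. Since the text immediately before the lemma already asserts $\rho(x) = C_f$ and $\rho(v) = \sum_i v_i C_f^i$, the real content reduces to the column-by-column Krylov identification in the previous paragraph, and that is where I would concentrate the written proof.
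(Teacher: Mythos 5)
Your proof is correct and takes essentially the same route as the paper's: both verify the identity column by column, resting on the identification of $C = \rho(x)$ as the matrix of multiplication by $x$, with commutativity ($x^j v = v\, x^j$) closing the argument. The only difference is bookkeeping --- the paper reads both $j$-th columns directly as the element $x^j v(x) \bmod f$ of $\K$, while you route the same commutativity through the matrix algebra via $C^k v = C^k \rho(v) e_0 = \rho(v) e_k$ --- so the substance is identical.
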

\begin{proof}
Let $e_j$ be the vector with a one in the $j$-th location and zeros elsewhere.
Then, abusing notation,
$\rho(v) e_j = v(x)x^j (\mod~ f)$
and $K_C(v)e_j = C^j v = x^jv(x) (\mod~ f)$.  Since this is true for
arbitrary $j$ the lemma is proved.
\end{proof}

Let $U$ and $V$ be $b \times d$ and $d \times b$ matrices over $\F$.  Let $u_i$ 
be the $i$-th row of $U$ and $v_j$ be $j$-th column of $V$.  The sequence 
$U \bar{C} V$  of $b \times b$ matrices can be viewed as a $b \times b$
matrix of sequences whose $(i,j)$ element is equal, by the discussion above,
to $u_i \rho(v_j)^T$.  This matrix can be mapped to the $b \times b$ matrix 
over $\K$ whose $(i,j)$ element is the product $u_i v_j = \rho(v_j) u_i$.  
This is the outer product $UV^T$, with $U$ and $V$ viewed as a column vector over $\K$
and a row vector over $\K$ respectively.  Hence it is a rank one matrix over 
$\K$ provided neither $U$ nor $V$ is zero.  Since any rank one matrix is 
an outer product, this mapping can be inverted. There is a one to 
one association of sequences $U \bar{C} V$ with rank one matrices over $\K$.

To show that this mapping relates rank to the probability that the block
projection $U \bar{A} V$ preserves the minimum polynomial of $A$, we must show 
that 
if $\sum_{k=1}^t U_k \bar{C_f} V_k = 0$ then the corresponding sum of $t$ rank one
matrices over $\K$ is the zero %, equivalently outer products, whose sum is the zero 
matrix and vice versa.  This will be shown using
the fact that the transpose $\rho(v)^T$ is similar to $\rho(v)$.  While it is well
known that a matrix is similar to its transpose, we provide a proof in the following
lemma which constructs the similarity transformation and shows that the same similarity transformation works independent of $v$.

\begin{lem}
\label{thm:transpose}
Given an irreducible monic polynomial $f \in \F_q[x]$ of degree $d$, there exists a symmetric nonsingular matrix $P$ such that $P^{-1} \rho(v) P = \rho(v)^T$, for all $v \in\F_q^d$.
\end{lem}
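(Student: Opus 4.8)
The plan is to produce $P$ not by a bare-hands guess but as the Gram matrix of a well-chosen symmetric bilinear form on $\K$ for which \emph{every} multiplication operator $\rho(v)$ is self-adjoint; self-adjointness then yields the conjugation relation for all $v$ in one stroke. The natural candidate is the trace form $B(a,b)=\mathrm{Tr}_{\K/\F_q}(ab)$. It is symmetric because $\K$ is commutative, and it is invariant in the sense that $B(\rho(v)a,b)=\mathrm{Tr}_{\K/\F_q}(vab)=B(a,\rho(v)b)$, which says precisely that $\rho(v)$ is self-adjoint with respect to $B$. (That one matrix suffices for all $v$ can also be seen from the fact that each $\rho(v)=\sum_{j} v_j C^j$ is a polynomial in $C=C_f$, so a single matrix conjugating $C$ to $C^T$ automatically conjugates every $\rho(v)$ to $\rho(v)^T$; the invariant form makes even this reduction unnecessary.)

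I would then pass to coordinates in the basis $\{1,x,\ldots,x^{d-1}\}$. Writing $[a]$ for the coordinate vector of $a$ and $G$ for the Gram matrix, so that $B(a,b)=[a]^T G[b]$, the self-adjointness identity $B(\rho(v)a,b)=B(a,\rho(v)b)$ becomes $\rho(v)^T G=G\,\rho(v)$, i.e.\ $\rho(v)^T=G\,\rho(v)\,G^{-1}$. Setting $P:=G^{-1}$, which is again symmetric, gives $P^{-1}\rho(v)P=G\,\rho(v)\,G^{-1}=\rho(v)^T$ for every $v$, exactly the assertion.

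What remains is to verify that $G$ (hence $P$) is symmetric and nonsingular. Symmetry is immediate; in fact $G_{ij}=\mathrm{Tr}_{\K/\F_q}(x^{i+j})$ depends only on $i+j$, so $G$ is a Hankel matrix. The only step requiring genuine argument is nonsingularity, which is equivalent to nondegeneracy of the trace form and hence to separability of $\K/\F_q$; I expect this to be the crux, but it is painless here because finite fields are perfect, so $\K/\F_q$ is automatically separable and $\mathrm{Tr}_{\K/\F_q}$ is not identically zero. Alternatively, one can sidestep nondegeneracy entirely by taking for $P$ the explicit anti-triangular Hankel matrix built from the coefficients of $f$: it is visibly nonsingular (its determinant is $\pm1$), and the cost is then shifted to a direct verification of $C P = P C^T$ from the companion structure --- the one routine computation this approach must still carry out.
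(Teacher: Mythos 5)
Your proof is correct, and its main line is genuinely different from the paper's. The paper's proof is two lines: it cites the Taussky--Zassenhaus theorem that every square matrix is similar to its transpose via a \emph{symmetric} nonsingular transform, applies it to $C_f$ to get $P^{-1}C_fP=C_f^T$, and then handles all $v$ at once by exactly the reduction you mention only in passing, namely that $\rho(v)=\sum_k v_k C_f^k$ is a polynomial in $C_f$, so conjugating $C_f$ to its transpose conjugates every $\rho(v)$ to $\rho(v)^T$. (The explicit anti-triangular Hankel matrix you offer as a fallback is also in the paper, given right after its proof as a concrete $P$ with $C_fP=PC_f^T$.) Your primary argument instead builds $P$ intrinsically: the trace form $B(a,b)=\mathrm{Tr}_{\K/\F_q}(ab)$ is symmetric, every multiplication operator $\rho(v)$ is self-adjoint with respect to it, and in coordinates this reads $\rho(v)^TG=G\rho(v)$ for the Gram matrix $G$, so $P:=G^{-1}$ (symmetric, since $G$ is) satisfies $P^{-1}\rho(v)P=\rho(v)^T$ for all $v$ simultaneously; nonsingularity of $G$ is nondegeneracy of the trace form, which holds since finite fields are perfect and hence $\K/\F_q$ is separable. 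What your route buys: it is self-contained, needing no citation to the general matrix-transpose similarity theorem, and it explains conceptually \emph{why} one symmetric $P$ works for the entire field $\K$ rather than merely for $C_f$ --- $P$ is (the inverse of) the Gram matrix of a form making all of $\rho(\K)$ self-adjoint; it even recovers the Hankel structure of the paper's explicit matrix, since $G_{ij}=\mathrm{Tr}(x^{i+j})$ depends only on $i+j$. What the paper's route buys: brevity, and avoidance of the nondegeneracy argument, which as you note is the one step in your approach requiring real (if standard) field theory.
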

\begin{proof}
We begin with $C_f$.  Every matrix is similar to it's transpose by a symmetric transform \citep{taussky1959}.  Let $P$ be a similarity transform such that $P^{-1}C_fP = C_f^T$.
Then $P^{-1}\rho(v)P = \sum_{k=0}^{d-1} v_k P^{-1}C_f^kP = \sum_{k=0}^{d-1} v_k (C_f^k)^T
= \rho(v)^T$.
\end{proof}

It may be informative to 
have an explicit construction of such a transform $P$.
It can be done with Hankel structure (equality on antidiagonals). 
Let $\H_n(a_1,a_2\ldots,a_n,$ $a_{n+1},\ldots, a_{2n-1})$ denote the $n\times n$ Hankel matrix with first row
$(a_1,a_2,$ $\ldots,a_n)$ and last row $(a_n,a_{n+1},\ldots,a_{2n-1})$.  For example
$ \H_2(a,b,c) = 
\left( \begin{matrix}
		a & b \\
		b & c \\
	\end{matrix}
\right).$
Then define $P$ as $P = -f_0 \oplus \H_{d-1}(f_2, f_3, \ldots, f_{d-1}, 1, 0, \ldots, 0)$.
A straightforward computation verifies $C_fP = PC_f^T$.  

\begin{lem}
\label{thm:map}
Given an irreducible monic polynomial $f \in \F_q[x]$ and it's extension field $\K = {\F_q[x]/\langle f(x) \rangle}$,
there exists a one-to-one, onto mapping from the $b\times b$ projections of $\bar{C_f}$ to $\K^{b\times b}$ 
that preserves zero sums, i.e. 
$\sum U_i C_f V_i = 0$ iff $\phi(\sum U_i C_f V_i)  = \sum \phi(U_i C_f V_i) = 0$.
\end{lem}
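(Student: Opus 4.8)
The plan is to make the correspondence sketched just before the lemma fully precise and then read off the two required properties — bijectivity and preservation of zero sums — from the $\F_q$-linearity of the construction. First I would pin down the domain: since every entry of a projection $U\bar{C_f}V$ is a scalar sequence annihilated by $f$, and such a sequence is determined by its first $d$ terms, the $\F_q$-span $\mathcal D$ of all $b\times b$ projections is exactly the space of $b\times b$ matrices whose entries are $f$-annihilated sequences. Identifying each entry with the element of $\K$ whose coordinates in the basis $1,x,\ldots,x^{d-1}$ are its leading $d$ terms gives an $\F_q$-linear bijection $\iota\colon \mathcal D\to\K^{b\times b}$.

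Next I would compute $\iota$ on a single projection. By Lemma \ref{thm:krillov_reg} the initial segment of the $(i,j)$ entry of $U\bar{C_f}V$ is $u_i K_C(v_j)=u_i\rho(v_j)$, where $u_i$ is the $i$-th row of $U$ and $v_j$ the $j$-th column of $V$; transposing the row of coordinates to a column shows $\iota(U\bar{C_f}V)_{ij}=\rho(v_j)^T u_i$. The nuisance is this transpose: the result is not yet the product $u_i v_j=\rho(v_j)u_i$ in $\K$, and no single fixed entrywise $\F_q$-map can convert $\rho(v_j)^T u_i$ into $\rho(v_j)u_i$ for all $u_i,v_j$ simultaneously. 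This is exactly where Lemma \ref{thm:transpose} enters: taking the symmetric nonsingular $P$ with $P^{-1}\rho(v)P=\rho(v)^T$, equivalently $P\rho(v)^T=\rho(v)P$, for all $v$, I would define $\phi$ to be $\iota$ followed by the entrywise application of $P$, i.e.\ $\phi(S)_{ij}=P\,\iota(S)_{ij}$ with entries of $\K$ read as coordinate vectors.

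With this definition $\phi(U\bar{C_f}V)_{ij}=P\rho(v_j)^T u_i=\rho(v_j)(Pu_i)=(Pu_i)\,v_j$, so $\phi$ sends each projection to the outer product over $\K$ of the column vector $(Pu_1,\ldots,Pu_b)$ with the row vector $(v_1,\ldots,v_b)$ — a rank-one matrix, genuinely of rank one unless $U=0$ or $V=0$. Since $\iota$ is an $\F_q$-linear bijection and entrywise multiplication by the nonsingular $P$ is an $\F_q$-linear bijection of $\K^{b\times b}$, the composite $\phi$ is an $\F_q$-linear bijection; the rank-one images of the projections span $\K^{b\times b}$, so $\phi$ is onto. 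The stated zero-sum property is then immediate from linearity and injectivity: $\phi(\sum_i U_i\bar{C_f}V_i)=\sum_i\phi(U_i\bar{C_f}V_i)$, and because $\phi$ is injective this sum of sequences vanishes if and only if the corresponding sum of rank-one matrices over $\K$ vanishes.

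The main obstacle is the transpose bookkeeping in the middle step: recognizing that the natural identification $\iota$ produces $\rho(v_j)^T u_i$ rather than the product $u_i v_j$, and that composing with the single fixed map $P$ of Lemma \ref{thm:transpose}, rather than some $v$-dependent correction (which could not be linear), is precisely what repairs the discrepancy while keeping $\phi$ an $\F_q$-linear bijection. The exact shape of the outer product — whether $u_i v_j$ or $(Pu_i)v_j$ — is immaterial for the lemma; only rank-one-ness, $\F_q$-linearity, and injectivity are used.
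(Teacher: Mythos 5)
Your proof is correct, and it rests on exactly the same two ingredients as the paper's --- lemma \ref{thm:krillov_reg} to read the initial segment of each entry as $u_i\rho(v_j)$, and the fixed symmetric $P$ of lemma \ref{thm:transpose} to repair the transpose --- but it is organized in a genuinely different way. The paper defines the map only on the projections themselves, as $U\bar{C_f}V \mapsto UV^T$, and proves preservation of zero sums by an entrywise computation: starting from $\sum_k u_{k,i}^T\rho(v_{k,j})=0$, it inserts $PP^{-1}$, applies the similarity, and concludes $\sum_k \tilde U_k V_k^T=0$ with $\tilde U_k = U_k P$, the converse following from invertibility of $P$. You instead enlarge the domain to the $\F_q$-span of the projections (the space of $b\times b$ matrices of $f$-annihilated sequences), define $\phi$ globally as the entrywise identification with $\K$ composed with entrywise multiplication by $P$, and check that $\phi$ is an $\F_q$-linear bijection carrying each projection to the outer product $(Pu_i)v_j$. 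This buys three things: the zero-sum property becomes an automatic consequence of linearity and injectivity rather than a forward-and-backward computation; the ``onto $\K^{b\times b}$'' clause of the statement becomes literally true, whereas the paper's own map is onto the rank-at-most-one matrices only and the statement is loose on this point; and it makes explicit that the zero-sum-preserving correspondence is really $U\bar{C_f}V \leftrightarrow (UP)V^T$ rather than $UV^T$ --- which is in fact what the paper's computation establishes, its announced map notwithstanding. As you observe, this twist by $P$ is immaterial downstream (theorem \ref{thm:probcompanion}), since $U\mapsto UP$ is a bijection and so preserves the uniform distribution on projections.
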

\begin{proof}
The previous discussion shows that the mapping $U \bar{C_f} V \rightarrow UV^T$
from $b\times b$ projections of $\bar{C_f}$ onto rank one matrices over $\K$
is one-to-one.  Let $u_{k,i}$ and $v_{k,j}$ be the $i$-th row of $U_k$
and and the $j$-th column of $V_k$, respectively. 
Let P be a matrix, whose existence follows from lemma \ref{thm:transpose}, such that $P^{-1}\rho(v)P = \rho(v)^T$.
Assume $\sum_{k=1}^t U_k \bar{C_f} V_k = 0$.  Then using lemma \ref{thm:krillov_reg} and properties of $\rho$
\begin{eqnarray*}
\dsum_{k=1}^t u_{k,i}^T \bar{C_f} v_{k,j} = 0 &~\Rightarrow~ & \dsum_{k=1}^t u_{k,i}^T \rho(v_{k,j}) = 0 %\\
~~\Rightarrow ~~ \dsum_{k=1}^t u_{k,i}^T PP^{-1}\rho(v_{k,j}) PP^{-1}= 0 \\
&\Rightarrow & \dsum_{k=1}^t u_{k,i}^T P \rho(v_{k,j})^T P^{-1}= 0 %\\
~~\Rightarrow ~~ \dsum_{k=1}^t (u_{k,i}^T P) \rho(v_{k,j})^T = 0 \\
&\Rightarrow & \dsum_{k=1}^t \tilde{u}_{k,i} \cdot v_{k,j} = 0, \mbox{where }
\tilde{u}_{k,i} = (u_{k,i}^T P). \\
\end{eqnarray*}
Let $\tilde{U}_k$ be the vector whose $i$-th row is $\tilde{u}_{k,i}$ then the
corresponding sum of outer projects $\sum_{k=1}^t \tilde{U}_k V_k^T = 0$.
Because $P$ is invertible, the argument can be done in reverse, and for any
zero sum of rank one matrices over $\K$ we can construct the corresponding sum of projections
equal to zero.
\end{proof}

Thus the probability that $\dsum_{i=1}^t U \bar{C_f} V = 0$ is the probability that randomly selected $t$-term outer products over $\K$ %rank-0 and rank-1 matrices $A_i \in \K^{b \times b}$ 
sum to zero.  
The next lemma on rank one updates provides basic results leading to these probabilities.

\begin{lem}
\label{lem:increase_rank}
Let $r,s \geq 0$ be given and consider rank one updates to $A = I_r \oplus 0_s$.
For conformally blocked column vectors
$u = (u_1^T, u_2^T)^T, v = (v_1^T,v_2^T)^T \in \F^r\times \F^s$.
we have that \\
$\rank(A +u v^T) = r-1$ if and only if $u_1^T v_1 = -1$ and $u_2, v_2$ are both zero, and\\
$\rank(A +u v^T) = r+1$ if and only if $u_2,v_2$ are both nonzero.
\end{lem}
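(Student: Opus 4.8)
The plan is to collapse the whole computation onto one small matrix by a bordering trick, and then read off both equivalences from a short case analysis. Throughout I use the standing fact that a rank-one update changes rank by at most one, so $\rank(A + uv^T) \in \{r-1, r, r+1\}$; it therefore suffices to determine exactly when the two extreme values occur.

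First I would form the bordered matrix
$$N = \begin{pmatrix} A & u \\ -v^T & 1 \end{pmatrix}.$$
Taking the Schur complement with respect to the invertible scalar in the $(2,2)$ position gives $\rank(N) = 1 + \rank(A + uv^T)$, since that Schur complement is exactly $A + uv^T$. On the other hand, because $A = I_r \oplus 0_s$, the block $I_r$ can be used to clear the entries of $u_1$ down the last column (by column operations) and the entries of $v_1$ along the last row (by row operations), and these operations touch nothing else; eliminating $v_1$ deposits the scalar $u_1^T v_1$ into the bottom-right corner. What remains is the $(s+1)\times(s+1)$ matrix
$$M = \begin{pmatrix} 0_s & u_2 \\ -v_2^T & 1 + u_1^T v_1 \end{pmatrix},$$
whence $\rank(N) = r + \rank(M)$ and so $\rank(A + uv^T) = r + \rank(M) - 1$.

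It then remains only to evaluate $\rank(M)$, which I would do by inspecting whether $u_2$ and $v_2$ vanish. Writing $w = 1 + u_1^T v_1$, the first $s$ rows of $M$ are scalar multiples of the final coordinate vector, so they span a line precisely when $u_2 \neq 0$, and the last row $(-v_2^T, w)$ is independent of that line precisely when $v_2 \neq 0$. Hence $\rank(M) = 2$ iff $u_2 \neq 0$ and $v_2 \neq 0$, which forces $\rank(A + uv^T) = r+1$; and $\rank(M) = 0$ iff $u_2 = v_2 = 0$ together with $w = 0$, i.e. $u_1^T v_1 = -1$, which forces $\rank(A + uv^T) = r-1$; in every remaining case $\rank(M) = 1$ and the rank is unchanged. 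These are exactly the two asserted equivalences.

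The computation is elementary, and the only steps needing care are verifying that the clearing which passes from $N$ to $M$ really leaves the displayed block intact (in particular that it is the scalar $u_1^T v_1$, and not some other quantity, that lands in the corner), and confirming in the case $u_2, v_2 \neq 0$ that the rank genuinely rises rather than merely being bounded above by $r+1$; both are transparent from the explicit reduction. If one prefers to avoid the bordering, the same conclusions follow by handling $u_2 = v_2 = 0$ with the identity $\det(I_r + u_1 v_1^T) = 1 + v_1^T u_1$, and obtaining the increase case from the observation that when $u_2 \neq 0$ and $v_2 \neq 0$ the column space of $A + uv^T$ contains both the column space of $A$ and the vector $u$, which lies outside it.
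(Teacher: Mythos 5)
Your proof is correct, and it takes a genuinely different route from the paper's. The paper argues by normalization: after a change of basis it assumes $u_1 = \alpha e_r$ and $u_2 = \beta e_{r+1}$ with $\alpha,\beta\in\{0,1\}$, displays the matrix $(I_r\oplus 0_s)+uv^T$ entrywise, quotes the classical fact (from Meyer's book) that $\rank(I_r+u_1v_1^T)=r-1$ exactly when $u_1^Tv_1=-1$, and certifies the rank-increase case by exhibiting a nonvanishing order-$(r+1)$ minor. You instead border the matrix as $N=\left(\begin{smallmatrix} A & u \\ -v^T & 1 \end{smallmatrix}\right)$, use Schur-complement rank additivity to get $\rank(N)=1+\rank(A+uv^T)$, and eliminate with the $I_r$ block to reach the $(s+1)\times(s+1)$ matrix $M$, obtaining the exact identity $\rank(A+uv^T)=r-1+\rank(M)$; since $\rank(M)\in\{0,1,2\}$ is characterized by inspection, both equivalences of the lemma (and, implicitly, the unchanged-rank case) follow at once, in both directions, with no case split on which blocks of $u$ vanish. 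Your route buys self-containedness and an equality rather than two one-sided verifications: no external citation, and no normalization step --- a point worth noting, since the paper's ``orthogonal change of basis'' needs care over a finite field (an orthogonal map cannot send an isotropic vector to a multiple of $e_r$; what actually justifies the paper's reduction is an arbitrary block-diagonal invertible $Q$ applied contragrediently, $u\mapsto Qu$, $v\mapsto Q^{-T}v$, which preserves $A$, the product $u_1^Tv_1$, and the vanishing pattern). The paper's route, in turn, buys a concrete displayed matrix from which the conditions are visible at a glance. Your elimination step from $N$ to $M$ is sound for exactly the reason you flag: the clearing columns and rows of the $I_r$ block are zero in all positions of the $0_s$ block, and once one of $u_1$, $v_1^T$ is cleared the corner entry $1+u_1^Tv_1$ is no longer touched. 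One caution on your closing aside: the determinant-plus-column-space alternative as sketched establishes only the sufficiency of the two conditions, not their necessity (e.g.\ ruling out rank $r-1$ when exactly one of $u_2,v_2$ is zero takes an extra argument), so it should not be taken as a complete substitute --- but your main bordering argument is complete, and nothing is missing.
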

\begin{proof}
Without loss of generality (orthogonal change of basis) we may restrict attention to the case that $u_1 = \alpha e_r$ and $u_2 = \beta e_{r+1}$, where 
$e_i$ is the $i$-th unit vector,
$\alpha = 0$ if $u_1 = 0$ and $\alpha = 1$ otherwise, and 
similarly for $\beta$ vis a vis $u_2$.
Suppose that in this basis $v = (w_1, \ldots, w_r, z_{r+1}, \ldots, z_n)^T$.
Then $$(I_r\oplus 0) + uv^T = 
\left(\begin{array}{ccc|ccc} 
1 & %0 & 
  \ldots & 0 & 0 & \ldots & 0\\
\vdots & %\vdots & 
  \ddots & \vdots & \vdots & \ddots & \vdots\\
\alpha w_1 & %\alpha w_2 & 
  \ldots & 1+\alpha w_r& \alpha z_{r+1} & \ldots & \alpha z_n\\ \hline
\beta w_1 & %\beta w_2 & 
  \ldots & \beta w_r& \beta z_{r+1} & \ldots & \beta z_n\\
\vdots & %\vdots & 
  \ddots & \vdots & \vdots & \ddots & \vdots\\
0 & \ldots & 0 & 0 & \ldots & 0\\
\end{array}\right).
$$

The rank of $I_r + u_1v_1^T$ is $r-1$ just in case $u_1^T v_1 = -1$
\citep{Meyer00}.  In our setting this condition is that $\alpha w_r = -1$.  
We see that, for a rank of $r-1$, we must have that 
$\alpha w_r = -1$ and $\beta, z$ both zero.
For rank $r+1$ it is clearly necessary that both of $\beta, z$ are nonzero.
It is also sufficient because for $z_i \neq 0$ the order $r+1$ minor 
$I_{r-1}\oplus \left(\begin{matrix} 
1+\alpha w_r& \alpha z_{i} \\
\beta w_r& \beta z_{i} \\
\end{matrix}\right)$ has determinant $\beta z_i \neq 0$.
These conditions translate into the statements of the lemma before the change of basis.
\end{proof}

\begin{cor}
\label{ranks}
Let $A \in \F_q^{n \times n}$ be of rank $r$, and let $u,v$ be uniformly random in $\F_q^n$. Then,
\begin{enumerate}
\item
the probability that $\rank(A+uv^T) = r-1$ is  $$D(r) = \frac{ q^{r-1} (q^r - 1) }{ q^{2n} },$$%, \mbox{ for } r > 0, D(0) = 0,$$
\item
the probability that $\rank(A+uv^T) = r+1$ is  $$U(r) = \frac{(q^{n-r}-1)^2}{q^{2(n-r)}},$$
\item
the probability that $\rank(A+uv^T) = r$ is  $$N(r) = 1 - D(r) - U(r) \geq \frac{2q^n - 1}{q^{2n}},$$
with equality when $r = 0$.
\end{enumerate}
\end{cor}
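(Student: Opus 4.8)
The plan is to normalize $A$ to the canonical form used in Lemma \ref{lem:increase_rank} and then turn each part into a counting problem. First I would exploit that both rank and the uniform distribution are preserved under equivalence $A \mapsto XAY$ with $X,Y \in \F_q^{n\times n}$ nonsingular. Writing a rank-$r$ matrix as $A = X(I_r\oplus 0_{n-r})Y$, we get
\[
A + uv^T = X\big((I_r\oplus 0_{n-r}) + (X^{-1}u)(Y^{-T}v)^T\big)Y ,
\]
and since $u\mapsto X^{-1}u$ and $v\mapsto Y^{-T}v$ are bijections of $\F_q^n$, the pair $(X^{-1}u, Y^{-T}v)$ is again uniformly random. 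Hence, without loss of generality, $A = I_r\oplus 0_{n-r}$, which is exactly the hypothesis of Lemma \ref{lem:increase_rank} with $s = n-r$.

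With this normalization, parts 1 and 2 reduce to counting. For part 1, Lemma \ref{lem:increase_rank} gives $\rank(A+uv^T)=r-1$ precisely when $u_2=v_2=0$ and $u_1^Tv_1=-1$; fixing $u_2=v_2=0$ leaves $u_1,v_1\in\F_q^r$, and $u_1^Tv_1=-1$ forces $u_1\neq0$ ($q^r-1$ choices) and then confines $v_1$ to an affine hyperplane ($q^{r-1}$ choices), yielding $(q^r-1)q^{r-1}$ favorable pairs out of $q^{2n}$, i.e. $D(r)$. For part 2 the condition is $u_2\neq0$ and $v_2\neq0$ with $u_1,v_1$ free, giving $q^{2r}(q^{n-r}-1)^2$ favorable pairs, which cancels to $U(r)$ after dividing by $q^{2n}=q^{2r}q^{2(n-r)}$.

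For part 3, the equality $N(r)=1-D(r)-U(r)$ holds because a rank-one update changes the rank by at most one, so the events of ranks $r-1$, $r$, $r+1$ partition the probability space. The only real obstacle is the inequality $N(r)\geq(2q^n-1)/q^{2n}$ together with equality at $r=0$, which I would obtain by showing $r=0$ minimizes $N(r)$. Setting $\Delta(r)=q^{2n}\big(N(r)-N(0)\big)$ and clearing the single negative power of $q$, a short computation gives the factorization
\[
q\,\Delta(r) = (q^r - 1)\,\big(2q^{n+1} - q^{r+1} - q^r - q\big).
\]
The first factor is nonnegative for $r\geq0$ and vanishes exactly at $r=0$; the second is nonnegative since $r\leq n$ and $q\geq2$ force $q^{r+1}+q^r+q\leq 2q^{n+1}$. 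Thus $\Delta(r)\geq0$ with equality at $r=0$, which is the claim. The crux is spotting this clean factorization; everything else is bookkeeping resting directly on Lemma \ref{lem:increase_rank}.
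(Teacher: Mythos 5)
Your proposal is correct, and its skeleton---normalize to $A = I_r \oplus 0_{n-r}$ by nonsingular equivalence, then invoke Lemma \ref{lem:increase_rank} and count---is exactly the paper's; your parts 1 and 2 match the paper's counting argument essentially verbatim. Where you genuinely diverge is the inequality in part 3. The paper never compares $N(r)$ with $N(0)$ directly: it rewrites the claim as $D(r)+U(r) \leq \bigl((q^n-1)/q^n\bigr)^2$, observes that $U(r) = \bigl((q^n-q^r)/q^n\bigr)^2$ exactly and $D(r) \leq \bigl((q^r-1)/q^n\bigr)^2$, and finishes with the elementary inequality $a^2+b^2 \leq (a+b)^2$ for nonnegative $a,b$, using $(q^n-q^r)+(q^r-1) = q^n-1$. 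You instead compute $q\cdot q^{2n}\bigl(N(r)-N(0)\bigr)$ exactly and factor it as $(q^r-1)\bigl(2q^{n+1}-q^{r+1}-q^r-q\bigr)$; I verified this factorization, and your nonnegativity argument for the second factor (from $r \leq n$, $q \geq 2$, via $q^n + q \leq q^{n+1}$) is sound. The two routes cost about the same. The paper's is slightly slicker---no factorization to discover, and bounding $D(r)$ by a square absorbs the awkward $q^{r-1}$ term---while yours yields strictly more information: it shows $N(0)$ is the exact minimum of $N(r)$ over all ranks, and it locates precisely when equality can occur away from $r=0$ (only when the second factor vanishes, e.g.\ $q=2$, $r=n=1$), whereas the paper's chain of inequalities only certifies equality at $r=0$. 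Both deliver the stated bound $N(r) \geq (2q^n-1)/q^{2n}$ with equality at $r=0$.
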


\begin{proof}
There exist nonsingular $R,S$ such that $RAS = I_r \oplus 0$ and $R(A+uv^T)S = I_r \oplus 0 + (Ru)(S^Tv)^T$.  Since $Ru$ and $S^Tv$ are uniformly random when $u,v$ are, we may assume without loss of generality that $A = I_r \oplus 0$.

For part 1, by corollary \ref{ranks}, the
rank of $I_r \oplus 0 + uv^T$ is less than $r$ only if both $u,v$ are zero in their last $n-r$ rows and $u^Tv = -1$.
For $u,v \in \F_q^r$, $u^Tv = -1$ only when $u \neq 0$ and we have, for the first $i$ such that $u_i \neq 0$, that $v_i = u_i^{-1}\sum_{j\neq i} u_jv_j$.  Counting, there are $q^r - 1$ possible $u$ and then $q^{r-1}$ $v$'s satisfying the conditions.  The stated probability follows.

For part 2, by the preceding lemma, the rank is increased only 
if the last $n-r$ rows of $u$ and $v$ are both nonzero. The probability of this
is $\frac{(q^{n-r}-1)^2}{q^{2(n-r)}}$.

For the part 3 inequality,
if the sign is changed and 1 is added to both sides, the inequality becomes 
$ D(r) + U(r) \leq \left(\frac{q^n - 1}{q^{n}}\right)^2$.
 Note that $U(r) = \left(\frac{q^n - q^r}{q^{n}}\right)^2$ and $D(r) \leq \left(\frac{q^r - 1}{q^{n}}\right)^2$.  Let $a = \left(\frac{q^n - q^r}{q^{n}}\right)$ and $b = \left(\frac{q^r - 1}{q^{n}}\right)$.  Note that $a$ and $b$ are positive.  Thus, it is obvious that $a^2 + b^2 \leq (a+b)^2$.  That is,
\[
U(r) + D(r) \leq \left(\frac{q^n - q^r}{q^{n}}\right)^2 + \left(\frac{q^r - 1}{q^{n}}\right)^2 \leq \left(\frac{q^n - 1}{q^{n}}\right)^2.
\]
Therefore, $N(r) = 1 - D(r) - U(r) \geq \frac{2q^n - 1}{q^{2n}}$.
\end{proof}

\begin{defn}
\label{def:q}
For $u_i,v_i$ uniformly random in $\F_q^b$, and $A = \sum_{i=1}^t u_iv_i^T \in \F_q^{n \times n}$,  let $\prank[q,n,t](r)$ denote the probability that $rank(A) = r$.
\end{defn}

\begin{cor}\label{recurrence}
Let $A = \dsum_{i=1}^t u_i v_i^T$, for uniformly random $u_i,v_i \in \F_q^n$, 
and let $D(r), U(r)$, and $N(r)$ be defined as described in corollary \ref{ranks}.  Let $\prank[t](r) = \prank[q,n,t](r)$ (definition \ref{def:q}).  Then,
$\prank(r)$ satisfies the recurrence relation
\begin{eqnarray*}
\prank(r) &=& \begin{cases}
  0, & \text{if $r < 0$ or $r > min(t,n)$} \\
  1, & \text{if $r = 0$ and $t = 0$} \\
  \phi_{t-1}(r), & otherwise,
\end{cases}
\end{eqnarray*}
where $\phi_t(r) = \prank(r-1)U(r-1) + \prank(r)N(r) + \prank(r+1)D(r+1)$; and $U(r),N(r),D(r)$ are defined as they are in corollary \ref{ranks}.
\end{cor}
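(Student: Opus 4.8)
The plan is to recognize $\prank(r)$ as the rank distribution generated by a random walk on ranks driven by successive independent rank-one updates, and to derive the recurrence by conditioning on the rank after $t-1$ updates and then applying the single-update transition probabilities of corollary \ref{ranks}. Write $A_t = \dsum_{i=1}^t u_iv_i^T$, so that $A_t = A_{t-1} + u_tv_t^T$ with $A_{t-1} = \dsum_{i=1}^{t-1}u_iv_i^T$, and observe that $u_t,v_t$ are chosen uniformly and independently of $u_1,\dots,u_{t-1},v_1,\dots,v_{t-1}$, hence independently of $A_{t-1}$.

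First I would dispose of the base cases. The empty sum $A_0$ is the zero matrix, which has rank $0$ with probability $1$, giving $\prank[0](0)=1$. A sum of $t$ rank-one matrices has rank at most $t$, every $n\times n$ matrix has rank at most $n$, and rank is nonnegative; thus $\prank(r)=0$ whenever $r<0$ or $r>\min(t,n)$. These two observations cover exactly the first two branches of the stated recurrence, and they also force the boundary terms $\prank[t-1](r-1)$ and $\prank[t-1](r+1)$ to vanish automatically when $r-1<0$ or $r+1>\min(t-1,n)$, so that the expression $\phi_{t-1}(r)$ is well defined for every $r$.

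For the remaining branch ($t\ge 1$, $0\le r\le\min(t,n)$) I would apply the law of total probability, conditioning on the value $r'=\rank(A_{t-1})$:
\[
\prank(r)=\dsum_{r'}\Prob\!\left(\rank(A_{t-1})=r'\right)\,\Prob\!\left(\rank(A_{t-1}+u_tv_t^T)=r \mid \rank(A_{t-1})=r'\right).
\]
The key step is that, by corollary \ref{ranks}, the conditional transition probability depends only on $r'$ and not on the particular matrix $A_{t-1}$: adding a uniformly random rank-one matrix to any fixed rank-$r'$ matrix lowers the rank by one with probability $D(r')$, raises it by one with probability $U(r')$, and leaves it unchanged with probability $N(r')$. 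Consequently a final rank of $r$ can arise only from $r'\in\{r-1,r,r+1\}$, contributing $\prank[t-1](r-1)U(r-1)$, $\prank[t-1](r)N(r)$, and $\prank[t-1](r+1)D(r+1)$ respectively; their sum is precisely $\phi_{t-1}(r)$.

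The hard part will be justifying this Markov-type reduction rigorously, namely that conditioning on the rank alone captures all relevant information about $A_{t-1}$. This is exactly what the normalization argument in the proof of corollary \ref{ranks} supplies: for any realization of $A_{t-1}$ of rank $r'$ there exist nonsingular $R,S$ with $RA_{t-1}S=I_{r'}\oplus 0$, and since $Ru_t$ and $S^Tv_t$ remain uniformly random, the transition probabilities are identical across all rank-$r'$ matrices. Combined with the independence of $(u_t,v_t)$ from $A_{t-1}$, this lets the single-update probabilities be factored out of the conditional expectation, completing the derivation.
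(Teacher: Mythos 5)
Your proof is correct and takes essentially the same approach as the paper's: the paper's own (one-sentence) proof just asserts that the recurrence is evident from the rank bounds, the base case $\prank[q,n,0](0)=1$, and the fact that a rank-one update changes rank by at most one, leaving the law-of-total-probability step implicit. Your write-up supplies exactly the details the paper omits, most importantly the justification, via the normalization $RA_{t-1}S=I_{r'}\oplus 0$ from the proof of corollary \ref{ranks}, that the transition probabilities depend only on the rank of $A_{t-1}$ and not on the particular matrix, which is what makes conditioning on rank alone legitimate.
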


\begin{proof}
The general recurrence is evident from the fact that a rank one update can change the rank by at most one, and that $\prank[0](0) = 1$.  The rank of the sum of $t$ rank one matrices cannot be greater than either $t$ or $n$, nor less than zero.  %The definition of $\phi_t(r)$ follows from corollary \ref{ranks}.
\end{proof}

These probabilities apply as well to the preimage of our mapping (block projections of direct sums of companion matrices), which leads to the next theorem.
\begin{thm}
\label{thm:probcompanion}
Let $f \in \F_q[x]$ be an irreducible polynomial of degree $d$, and let $A= \dirsum_{i=1}^s C_f \in \F_q^{n \times n}$.  Then, $$\pmp(A) = 1 - \prank[s](0) \geq 1 - \prank[1](0),$$ where $\prank[s](r) = \prank[q^d,b,s](r)$ (definition \ref{def:q}).
\end{thm}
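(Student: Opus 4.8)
The plan is to first dispatch the equality $\pmp(A)=1-\prank[s](0)$, which is largely an assembly of the preceding lemmas, and then to treat the inequality $\prank[s](0)\le\prank[1](0)$, which is where the actual work lies.

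For the equality I would start from Lemma~\ref{thm:zero_sum_prob}, giving $\pmp(A)=\Prob(S\neq0)=1-\Prob(S=0)$ for $S=\dsum_{i=1}^s U_i\bar{C_f}V_i$. The zero-sum-preserving bijection of Lemma~\ref{thm:map} identifies $S$ with $\dsum_{i=1}^s\tilde U_iV_i^T$, a sum of $s$ rank-one matrices over $\K=\F_q[x]/\langle f\rangle\cong\F_{q^d}$, and $S=0$ holds exactly when this sum vanishes. The only point needing care is that, when the entries of the $U_i,V_i$ are uniform over $\F_q$, the associated vectors $\tilde U_i,V_i\in\K^b$ are uniform and independent: this is immediate because the coordinatewise identification $\F_q^d\cong\K$ is a bijection and $\tilde U_i=U_iP$ for the fixed nonsingular $P$ of Lemma~\ref{thm:transpose}, so right multiplication by $P$ merely permutes $\F_q^d$. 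Comparing with Definition~\ref{def:q} then yields $\Prob(S=0)=\prank[q^d,b,s](0)=\prank[s](0)$.

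The inequality is the crux, and I would establish the stronger fact that $\prank[s](0)$ is non-increasing in $s$. The decisive move is to repackage the sum: setting $\mathcal U=(u_1\,|\,\cdots\,|\,u_s)$ and $\mathcal V=(v_1\,|\,\cdots\,|\,v_s)$, two $b\times s$ matrices over $\K$ with independent uniform entries, one has $\dsum_{i=1}^s u_iv_i^T=\mathcal U\mathcal V^T$. Conditioning on $\mathcal U$ of rank $\rho$, the equation $\mathcal U\mathcal V^T=0$ says precisely that each of the $b$ rows of $\mathcal V$ lies in the $(s-\rho)$-dimensional right kernel of $\mathcal U$; as these rows are independent and uniform in $\K^s$, this event has probability $(Q^{-\rho})^b$ with $Q=q^d$. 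Hence $\prank[s](0)=\mathbb{E}\big[Q^{-b\,\rank(\mathcal U)}\big]$. Now coupling the $s$- and $(s+1)$-column cases by appending one fresh uniform column cannot decrease $\rank(\mathcal U)$, and since $\rho\mapsto Q^{-b\rho}$ is non-increasing the integrand drops pointwise, giving $\prank[s+1](0)\le\prank[s](0)$ and therefore $\prank[s](0)\le\prank[1](0)$ for all $s\ge1$; note $1-\prank[1](0)=(1-1/q^{db})^2$, recovering $\pmp(C_f)$ from Theorem~\ref{thm:block}. I expect this monotonicity to be the main obstacle: a naive coupling of the rank random walk underlying Corollary~\ref{recurrence} fails, because that walk can return to $0$, so $\{S=0\}$ is not monotone in $s$ along sample paths---only after rewriting $\prank[s](0)$ as the expectation of the monotone functional $Q^{-b\,\rank(\mathcal U)}$ does the comparison become clean.
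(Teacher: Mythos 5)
Your proof is correct, and while the equality half follows the paper's own route (citing Lemmas \ref{thm:zero_sum_prob} and \ref{thm:map}, with the added care about uniformity of $\tilde U_i = U_iP$ being a welcome explicit detail the paper leaves implicit), your treatment of the inequality is genuinely different from the paper's. The paper specializes the recurrence of Corollary \ref{recurrence} at $r=0$, namely $\prank[s+1](0) = \prank[s](0)N(0) + \prank[s](1)D(1)$, bounds $\prank[s](1) \leq 1-\prank[s](0)$, and observes that the resulting affine map $g(x) = x\frac{2q^{db}-q^d}{q^{2db}} + \frac{q^d-1}{q^{2db}}$ is increasing on $[0,1]$ with $g(1) = \prank[1](0)$; this yields $\prank[s](0) \leq \prank[1](0)$ but nothing more. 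Your argument instead integrates out $\mathcal V$ to get the closed form $\prank[s](0) = \mathbb{E}\bigl[Q^{-b\,\rank(\mathcal U)}\bigr]$ with $Q = q^d$, and then couples the $s$ and $s+1$ cases by appending a fresh column to $\mathcal U$, which can only increase rank. This buys you strictly more: full monotonicity $\prank[s+1](0) \leq \prank[s](0)$ rather than just comparison with $s=1$, plus an exact summation formula $\prank[s](0) = \sum_\rho \Prob(\rank \mathcal U = \rho)\,Q^{-b\rho}$ that bypasses the recurrence entirely for the $r=0$ case. Your closing remark is also on point: the event $\{S=0\}$ is not monotone along sample paths of the rank walk (the walk can return to $0$), so the naive coupling fails, and the expectation representation is exactly what makes the comparison pointwise. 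What the paper's approach buys in exchange is economy---it reuses machinery (the recurrence) that it needs anyway for the exact probability computations in the examples, and avoids introducing the conditional-kernel computation.
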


\begin{proof}  By lemmas \ref{thm:zero_sum_prob} and
 \ref{thm:map}, the probability that a $b\times b$ projection of $A$ fails is precisely $\prank[s](0)$.

For the inequality, in all cases $\prank[s](1) \leq 1-\prank[s](0)$.  Therefore, 

\begin{eqnarray*}
\prank[s+1](0) &=& \prank[s](0) \frac{2q^{db}-1}{q^{2db}} + \prank[s](1) \frac{q^d-1}{q^{2db}} \\
&\leq & \prank[s](0) \frac{2q^{db}-1}{q^{2db}} + (1-\prank[s](0)) \frac{q^d-1}{q^{2db}} \\
&=& \prank[s](0) \frac{2q^{db}-q^d}{q^{2db}} + \frac{q^d-1}{q^{2db}}.
\end{eqnarray*}

Let $g(x) = x\frac{2q^{db}-q^d}{q^{2db}} + \frac{q^d-1}{q^{2db}}$.  Since $q,d,b$ are positive integers, $g(x)$ is linear with positive slope.  
Probability $\prank[s](0)$ has range [0,1] and we have $\prank[s+1](0) \leq g(\prank[s](0)) \leq g(1) = \frac{2q^{db}-1}{q^{2db}} = \prank[1](0)$.  Therefore, $\prank[1](0) \geq \prank[s](0)$, for all $s \geq 1$.
\end{proof}

Theorem \ref{thm:probcompanion} generalizes theorem \ref{thm:block}.  That is, $$\pmp(C_f) = 1-Q_{q^d,b,1}(0) = (1-1/q^{db})^2,$$ where $f \in \F_q[x]$ is an irreducible polynomial of degree $d$.  %The inequality in theorem \ref{thm:probcompanion} shows 
Theorem \ref{thm:probcompanion} makes clear that $\pmp(\dirsum_{i=1}^s C_f)$ is minimized when there is a single block, $s=1$.

The following theorem summarizes the exact computation of the probability that
the minimal polynomial of a matrix is preserved under projection, in terms of the elementary divisor structure of the matrix.

\begin{thm}
\label{thm:prob1}
Let $A \in \F_q^{n \times n}$ be similar to
$J = \dirsum_i \dirsum_j J_{f_i^{e_{i,j}}},$
where the $f_i$ are distinct irreducibles of degree $d_i$, and the $e_{i,j}$ are positive exponents, nonincreasing with respect to $j$.  Let $s_i$ be the number of $e_{i,j}$ equal to $e_{i,1}$.  Then,
$$\pmp(A) = \pmp(J) = \prod_i \pmp\left( \dirsum_j J_{f_i^{e_{i,j}}} \right) = \prod_i \pmp\left( \dirsum_{k=1}^{s_i} C_{f_i} \right) = \prod_i (1 - Q_{q^{d_i}, b, s_i}(0)).$$
\end{thm}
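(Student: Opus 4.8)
The plan is to assemble this formula directly from the chain of reductions built up through the section; each equality in the statement corresponds to invoking one earlier result, so the proof is essentially bookkeeping to align the notation of those results with the indices here. First I would invoke Lemma \ref{lem:sim} to replace $A$ by its generalized Jordan form $J$, giving the first equality $\pmp(A) = \pmp(J)$, since similar matrices induce the same distribution of projections. This lets me work entirely with the block-diagonal $J = \dirsum_i \dirsum_j J_{f_i^{e_{i,j}}}$.

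Next, because the $f_i$ are distinct irreducibles, the minimal polynomials of the projected primary components are pairwise coprime, each dividing a power of its own $f_i$. Theorem \ref{thm:minpolysum} then gives the product decomposition $\pmp(J) = \prod_i \pmp(\dirsum_j J_{f_i^{e_{i,j}}})$, the second equality. For the third equality I would treat each primary component separately via Lemma \ref{lem:companion}. The point requiring care is matching $s_i$: since the $e_{i,j}$ are nonincreasing in $j$, the maximum exponent is $e_{i,1}$, and the theorem's $s_i$ (the number of $e_{i,j}$ equal to $e_{i,1}$) is exactly the multiplicity $s$ of that maximum in the lemma. The lemma collapses the component to $\pmp(\dirsum_{k=1}^{s_i} C_{f_i})$.

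Finally, the last equality follows from Theorem \ref{thm:probcompanion} applied to each $\dirsum_{k=1}^{s_i} C_{f_i}$. The only substitution to get right is that the rank probability lives over the residue field $\K = \F_q[x]/\langle f_i \rangle$ of cardinality $q^{d_i}$, with ambient dimension the block size $b$ and $s_i$ rank-one updates, producing the factor $1 - Q_{q^{d_i}, b, s_i}(0)$. Taking the product over $i$ yields the stated formula.

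Since every step is a direct citation of a preceding result, I expect no genuine obstacle. The only thing demanding attention is keeping three indices distinct --- the irreducible index $i$, the Jordan-block index $j$, and the companion-copy index $k$ --- and confirming that the degree $d_i$ and multiplicity $s_i$ are inserted into the correct slots of $Q_{q^{d_i}, b, s_i}$.
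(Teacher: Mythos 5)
Your proposal is correct and follows exactly the same route as the paper's proof: Lemma \ref{lem:sim} for $\pmp(A)=\pmp(J)$, Theorem \ref{thm:minpolysum} for the product over primary components, Lemma \ref{lem:companion} to collapse each component to $\dirsum_{k=1}^{s_i} C_{f_i}$, and Theorem \ref{thm:probcompanion} for the factor $1-Q_{q^{d_i},b,s_i}(0)$. Your attention to matching $s_i$ with the multiplicity of the maximal exponent and to placing $q^{d_i}$, $b$, $s_i$ in the correct slots is precisely the bookkeeping the paper's proof implicitly performs.
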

\begin{proof}
By lemma \ref{lem:sim}, $\pmp(A) = \pmp(J)$. By theorem \ref{thm:minpolysum}, $\pmp(J) = \prod_i \pmp\left( \dirsum_j J_{f_i^{e_{i,j}}} \right)$.  
By lemma \ref{lem:companion}, $\pmp\left( \dirsum_j J_{f_i^{e_{i,j}}} \right) = \pmp\left( \dirsum_{k=1}^{s_i} C_{f_i} \right)$.  
Finally, by theorem \ref{thm:probcompanion}, $\pmp\left( \dirsum_{k=1}^{s_i} C_{f_i} \right) = 1-Q_{q^{d_i}, b, s_i}(0)$.  
Therefore, $\pmp(A) = \prod_i (1-Q_{q^{d_i},b,s_i}(0))$.
\end{proof}

\subsection{Examples}\label{sec:examples}

This section uses theorem \ref{thm:prob1} to compute $\pmp(A)$ for several example matrices, and compares the probability for matrices with related but not identical invariant factor lists.
\[A_1 = \left(
        \begin{array}{rr|r|r|r}
          0 & 1 & 0 & 0 & 0 \\
          1 & 4 & 0 & 0 & 0 \\ \hline
          0 & 0 & 3 & 0 & 0 \\ \hline
          0 & 0 & 0 & 3 & 0\\ \hline
          0 & 0 & 0 & 0 & 3
        \end{array}
      \right),
A_2 = \left(
        \begin{array}{rrrr|r}
          0 & 1 & 0 & 0 & 0 \\
          1 & 4 & 0 & 0 & 0 \\
          1 & 0 & 0 & 1 & 0 \\
          0 & 1 & 1 & 4 & 0 \\ \hline
          0 & 0 & 0 & 0 & 3
        \end{array}
      \right),
A_3 = \left(
        \begin{array}{rr|rr|r}
          0 & 1 & 0 & 0 & 0 \\
          1 & 4 & 0 & 0 & 0 \\ \hline
          0 & 0 & 0 & 1 & 0 \\
          0 & 0 & 1 & 4 & 0 \\ \hline
          0 & 0 & 0 & 0 & 3
        \end{array}
      \right),
\]\[
A_4 = \left(
        \begin{array}{rr|rr|r}
          0 & 1 & 0 & 0 & 0 \\
          1 & 4 & 0 & 0 & 0 \\ \hline
          0 & 0 & 3 & 0 & 0 \\
          0 & 0 & 1 & 3 & 0 \\ \hline
          0 & 0 & 0 & 0 & 3
        \end{array}
      \right),
A_5 = \left(
        \begin{array}{r|r|r|r|r}
          1 & 0 & 0 & 0 & 0 \\ \hline
          0 & 2 & 0 & 0 & 0 \\ \hline
          0 & 0 & 3 & 0 & 0 \\ \hline
          0 & 0 & 0 & 4 & 0 \\ \hline
          0 & 0 & 0 & 0 & 5
        \end{array}
      \right),
 \]
where $A_i \in \F_7^{5 \times 5}$.  Let $f(x)$ and $g(x)$ be the irreducible polynomials $(x^2 + 3x + 6)$ and $(x+4)$ in $\F_7[x]$.  Let $F(A)$ denote the list of invariant factors of $A$ ordered largest to smallest. Thus,

\begin{eqnarray*}
F(A_1) &=& \{f(x)g(x), g(x), g(x)\}, \\
F(A_2) &=& \{f(x)^2g(x)\}, \\
F(A_3) &=& \{f(x)g(x), f(x) \}, \\
F(A_4) &=& \{f(x)g(x)^2, g(x) \}, \\
F(A_5) &=& \{(x+2)(x+3)(x+4)(x+5)(x+6)\}.
\end{eqnarray*}

By theorem \ref{thm:prob1}, 

\begin{eqnarray*}
\pmp[7,b](A_1) &=& \pmp[7,b](C_f) \pmp[7,b](C_g\oplus C_g\oplus C_g) = (1-\prank[7^2,b,1](0))(1-\prank[7,b,3](0)), \\
\pmp[7,b](A_2) &=& \pmp[7,b](J_{f^2}) \pmp[7,b](C_g) = (1-\prank[7^2,b,1](0))(1-\prank[7,b,1](0)), \\
\pmp[7,b](A_3) &=& \pmp[7,b](C_f \oplus\ C_f) \pmp[7,b](C_g) = (1-\prank[7^2,b,2](0))(1-\prank[7,b,1](0)), \\
\pmp[7,b](A_4) &=& \pmp[7,b](C_f) \pmp[7,b](J_{g^2} \oplus C_g) = (1-\prank[7^2,b,1](0))(1-\prank[7,b,1](0)), \\
\pmp[7,b](A_5) &=& \prod_{i=1}^5 \pmp[7,b](C_{x-i+7}) = \prod_{i=1}^5 (1-\prank[7,b,1](0)).
\end{eqnarray*}

\begin{table}[H]
 \begin{center}
 \caption{\label{table:blocksize}$\pmp[7,b](A_i)$ vs $b$}
 \begin{tabular}{|c||c|c|c|c|}
   \hline
   & b=1 & b=2 & b=3 & b=4 \\
   \hline
   $\pmp[7,b](A_1)$ & 0.820 & 0.998 & 0.99998 & 0.9999996 \\
   $\pmp[7,b](A_2)$ & 0.705 & 0.959 & 0.994   & 0.9992 \\
   $\pmp[7,b](A_3)$ & 0.719 & 0.960 & 0.994   & 0.9992 \\
   $\pmp[7,b](A_4)$ & 0.705 & 0.959 & 0.994   & 0.9992 \\
   $\pmp[7,b](A_5)$ & 0.214 & 0.814 & 0.971   & 0.996 \\
   \hline
 \end{tabular}
 \end{center}
\end{table}

By part 3 of theorem \ref{thm:block}, $(1-\prank[7^2,b,1](0)) = (1-1/7^{2b})^2$ and $(1-\prank[7,b,1](0)) = (1-1/7^b)^2$.  Using
the recurrence relation in corollary \ref{recurrence}, 
we may compute $\prank[7,b,3](0)$ and $\prank[7^2,b,2](0)$.
Table \ref{table:blocksize} shows the resulting probabilities. 
Observe that $\pmp[7,b](A_i)$ increases as $b$ increases.

These five examples illustrate the effect of varying matrix structure and block size on $\pmp(A_i)$.  By theorem \ref{thm:probcompanion}, $\pmp[7,b](C_g \dirsum C_g \dirsum C_g) > \pmp[7,b](C_g)$ and $\pmp[7,b](C_f \dirsum C_f) > \pmp[7,b](C_f)$.  By theorem \ref{thm:prob1}, $\pmp[7,b](J_{f^2}) = \pmp[7,b](C_f)$ and $\pmp[7,b](J_{g^2} \dirsum C_g) = \pmp[7,b](C_g)$.  Therefore, $\pmp[7,b](A_1) > \pmp[7,b](A_2)$ and similarly $\pmp[7,b](A_3) > \pmp[7,b](A_2) = \pmp[7,b](A_4)$.  
Finally, since $(1-1/7^b)^2 < 1$ and $(1-1/7^b)^2 < (1-1/7^{2b})^2$, $\pmp[7,b](C_{h_1} \dirsum C_{h_2}) < \pmp[7,b](C_g)$ and $\pmp[7,b](C_h) < \pmp[7,b](C_f)$, for any linear $h_1(x),h_2(x),h(x) \in \F_7[x]$.  Therefore, $\pmp[7,b](A_5)$ has the minimal probability amongst the examples and in fact has the minimal
probability for any $5 \times 5$ matrix.  The worst case bound is explored
further in the following section.

\section{Probability Bounds: Matrix of Unknown Structure}\label{sec:worst}
Given the probabilities determined in section~\ref{sec:prob} 
of minimum polynomial preservation under projection, 
it is intuitively clear that the lowest probability of success would occur when there are many elementary divisors and the degrees of the irreducibles are as small as possible.  This is true and is precisely stated in theorem \ref{thm:worst} below.  First we need several lemmas concerning direct sums of Jordan blocks.

For $A \in \F_q^{n \times n}$, %where $\F = \F_q$ 
as before, 
$\pmp(A)$ denotes the 
probability that\\ $\minpoly(A) = \minpoly(U\bar AV)$, where $U \in \F_q^{b\times n}$ and $V \in \F_q^{n\times b}$ are uniformly random.

\begin{lem}
\label{lem:prob1}
Let $f$ be an irreducible polynomial over $\F_q$, let $e_1 = \ldots = e_s > e_{s+1} \geq \ldots \geq e_t$ be a sequence of exponents for $f$, 
and let $b$ be the projection block size.  Then 
	\[
	 \pmp(J_{f^{e_1+\cdots + e_t}}) \leq
	 \pmp(J_{f^{e_1}} \oplus \cdots \oplus J_{f^{e_t}}) 
	 = \pmp(J_{f^{e_1}} \oplus \cdots \oplus J_{f^{e_s}})
	\]
\end{lem}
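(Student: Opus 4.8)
The plan is to reduce all three quantities to direct sums of copies of the companion matrix $C_f$ via lemma \ref{lem:companion}, and then read off their values from theorem \ref{thm:probcompanion}. Write $d = \deg f$. The essential observation is that, for a primary component of $f$, the probability $\pmp$ depends only on $q$, $d$, $b$, and the number of Jordan blocks attaining the maximal exponent: the actual exponent values and the blocks of smaller exponent play no role, because the latter are annihilated by $f^{e-1}$, where $e$ is the maximal exponent. This is precisely what lemma \ref{lem:companion} records, and it is the only structural input the argument needs.

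First I would dispose of the equality. In the middle sum $J_{f^{e_1}} \oplus \cdots \oplus J_{f^{e_t}}$ the maximal exponent is $e_1$, and since $e_1 = \cdots = e_s > e_{s+1}$, it is attained by exactly $s$ of the blocks. In the right-hand sum $J_{f^{e_1}} \oplus \cdots \oplus J_{f^{e_s}}$ all $s$ exponents equal $e_1$, so the maximal exponent is again $e_1$, again attained by exactly $s$ blocks. Hence lemma \ref{lem:companion} collapses both sums to $\dirsum_{k=1}^s C_f$, and the two probabilities coincide, each being $1 - \prank[q^d,b,s](0)$ by theorem \ref{thm:probcompanion}.

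Next I would handle the inequality. The left-hand matrix $J_{f^{E}}$ with $E = e_1 + \cdots + e_t$ is a single Jordan block, so its maximal exponent $E$ is attained by one block; lemma \ref{lem:companion} reduces it to $C_f$, and theorem \ref{thm:probcompanion} gives $\pmp(J_{f^E}) = 1 - \prank[q^d,b,1](0)$. Comparing with the middle term $1 - \prank[q^d,b,s](0)$, the assertion is exactly the monotonicity $\prank[q^d,b,s](0) \le \prank[q^d,b,1](0)$, equivalently $1 - \prank[q^d,b,1](0) \le 1 - \prank[q^d,b,s](0)$, which is established in theorem \ref{thm:probcompanion} (there it is shown that the failure probability $\prank[\cdot](0)$ is maximized when there is a single block, $s=1$).

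I do not expect a genuine obstacle here: no new estimate is needed beyond the monotonicity already proved in theorem \ref{thm:probcompanion}. The only point requiring care is the bookkeeping — verifying that the integer $s$ really is the multiplicity of the maximal exponent in each of the two direct sums, so that lemma \ref{lem:companion} applies with the same value of $s$ on both sides of the equality, and that the single block on the left is correctly matched to the case $s = 1$.
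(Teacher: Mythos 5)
Your proof is correct and follows essentially the same route as the paper: both reduce each term to a direct sum of copies of $C_f$ (the paper via theorem \ref{thm:prob1}, you via its ingredient lemma \ref{lem:companion}), identify the multiplicity $s$ of the maximal exponent in both sums, and invoke the monotonicity $\prank[q^d,b,1](0) \geq \prank[q^d,b,s](0)$ from theorem \ref{thm:probcompanion}. No substantive difference.
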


\begin{proof}  % d is not used   Setting $\deg(f) = d$, 
This follows from part 3 of 
theorem \ref{thm:block}, and theorems \ref{thm:probcompanion} and \ref{thm:prob1}, since\\ $\pmp(J_{f^{e_1+\cdots + e_t}}) = 1-\prank[1](0) \le 1-\prank[s](0) = \pmp(J_{f^{e_1}} \oplus \cdots \oplus J_{f^{e_t}})$.  %Note that only the number of occurrences of the largest block affects the probability of success.
\end{proof}

\begin{lem}
\label{lem:prob2}
Let $f$ be an irreducible polynomial over $\F_q$ of degree $d$, let 
$f_1,\ldots,f_e$ be distinct irreducible polynomials of degree $d$ over 
$\F_q$,  
and let $b$ be the projection block size.  Then 
  \[
     \pmp(J_{f_1} \oplus \cdots \oplus J_{f_e}) \le \pmp(J_{f^e}).
  \]
\end{lem}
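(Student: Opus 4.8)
The plan is to reduce both sides to explicit closed forms using the results already established and then compare them as powers of a single quantity in $(0,1)$. The left-hand side is a direct sum over the \emph{distinct} irreducibles $f_1,\ldots,f_e$, each appearing with exponent one, so $J_{f_i}=C_{f_i}$ and each $f_i$ is its own primary component. Applying theorem \ref{thm:minpolysum} (reduction to primary components) gives
\[
\pmp(J_{f_1}\oplus\cdots\oplus J_{f_e}) = \prod_{i=1}^e \pmp(J_{f_i}) = \prod_{i=1}^e \pmp(C_{f_i}).
\]
Since each $f_i$ has degree $d$, part 3 of theorem \ref{thm:block} yields $\pmp(C_{f_i}) = (1-1/q^{db})^2$ for every $i$, so the left-hand side equals $(1-1/q^{db})^{2e}$.

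For the right-hand side I would invoke theorem \ref{thm:block} once more, using the fact recorded there that $\pmp$ of a single Jordan block is independent of its exponent: $\pmp(J_{f^e}) = \pmp(C_f) = (1-1/q^{db})^2$, again because $\deg f = d$.

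It then remains only to compare $(1-1/q^{db})^{2e}$ with $(1-1/q^{db})^2$. Writing $\alpha = 1 - 1/q^{db}$, the valid ranges $q\ge 2$, $d\ge 1$, $b\ge 1$ give $q^{db}\ge 2$, hence $0 < \alpha < 1$. Since $e\ge 1$ forces $2e \ge 2$, and raising a number in $(0,1)$ to a larger exponent only decreases it, we get $\alpha^{2e}\le \alpha^2$, which is exactly the claimed inequality.

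I do not expect any genuine obstacle here: the content is entirely in recognizing that both configurations collapse, via theorems \ref{thm:block} and \ref{thm:prob1}, to powers of the same base $\alpha=1-1/q^{db}$, after which the inequality is just monotonicity of $t\mapsto\alpha^t$ on $(0,1)$. The only point meriting a line of care is verifying $0<\alpha<1$, i.e.\ that $q^{db}\ge 2$ for the admissible field cardinalities and block sizes, so that the exponent comparison is valid rather than reversed. Conceptually, the lemma records that spreading a fixed total degree $ed$ across $e$ distinct degree-$d$ irreducibles (each contributing an independent factor $(1-1/q^{db})^2$) is harder to preserve than concentrating it in a single primary block $J_{f^e}$, whose probability is insensitive to the exponent.
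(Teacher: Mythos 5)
Your proof is correct and follows essentially the same route as the paper's: both decompose the left-hand side via theorem \ref{thm:minpolysum} into $\prod_{i=1}^e \pmp(J_{f_i})$, evaluate each factor and $\pmp(J_{f^e})$ as $(1-1/q^{db})^2$ using part 3 of theorem \ref{thm:block}, and conclude by noting that a product of $e$ copies of a quantity in $(0,1)$ is at most one copy. Your write-up merely makes explicit the monotonicity step the paper leaves implicit.
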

\begin{proof}  This follows from theorem
\ref{thm:minpolysum} and part 3 of theorem \ref{thm:block}, since $\pmp(J_{f_1} \oplus \cdots \oplus J_{f_e}) = \prod_{i=1}^e \pmp(J_{f_i})$ and
$\pmp(J_{f^e}) = \pmp(J_{f_i}) = (1-1/q^{db})^2 < 1$.
\end{proof}

\begin{lem}
\label{lem:prob3}
Let $f_1$ and $f_2$ be irreducible polynomials over $\F_q$ of degree $d_1$ 
and $d_2$ respectively 
and let $b$ be any projection block size.  Then, if $d_1 \le d_2$,
  \[
     \pmp(J_{f_1}) \le \pmp(J_{f_2}).
  \]
\end{lem}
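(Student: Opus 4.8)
The plan is to reduce everything to the closed-form expression for $\pmp$ of a single Jordan block, which is already available from part 3 of theorem \ref{thm:block}. That result states that for an irreducible $f$ of degree $d$ one has $\pmp(J_{f^e}) = \pmp(C_f) = (1-1/q^{db})^2$, independent of the exponent $e$. Applying this directly to $f_1$ and $f_2$ gives
\[
\pmp(J_{f_1}) = \left(1 - \frac{1}{q^{d_1 b}}\right)^2, \qquad \pmp(J_{f_2}) = \left(1 - \frac{1}{q^{d_2 b}}\right)^2.
\]
So the lemma becomes a purely numerical comparison of these two quantities under the hypothesis $d_1 \le d_2$.

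Next I would observe that the map $t \mapsto (1 - 1/q^{tb})^2$ is monotonically nondecreasing for $t \ge 1$. Concretely, since $q \ge 2$ and $b \ge 1$, the exponent $d_1 b \le d_2 b$ gives $q^{d_1 b} \le q^{d_2 b}$, hence $1/q^{d_1 b} \ge 1/q^{d_2 b}$, and therefore $1 - 1/q^{d_1 b} \le 1 - 1/q^{d_2 b}$. Both of these base quantities lie in the interval $[1/2, 1)$ and in particular are nonnegative, so squaring preserves the inequality. This yields $\pmp(J_{f_1}) \le \pmp(J_{f_2})$ as required.

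I do not expect any real obstacle here: the work was all done in theorem \ref{thm:block}, and what remains is the elementary fact that a larger irreducible degree enlarges the effective field size $q^d$ seen by the projection, which can only improve the chance of preservation. The only point worth stating carefully is that squaring is monotone on the relevant range, which is immediate because the bases are nonnegative. One could phrase the whole argument in a single displayed chain of inequalities, but separating the substitution step (invoking theorem \ref{thm:block}) from the monotonicity step makes the logical dependence on the earlier result transparent.
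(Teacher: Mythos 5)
Your proof is correct and follows exactly the paper's own route: both invoke part 3 of theorem \ref{thm:block} to write $\pmp(J_{f_i}) = (1-1/q^{d_i b})^2$ and then conclude by the elementary monotonicity $(1-1/q^{d_1 b})^2 \le (1-1/q^{d_2 b})^2$. You merely spell out the monotonicity and nonnegativity details that the paper leaves implicit, which is fine.
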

\begin{proof}  The follows again from Part 3 of theorem
\ref{thm:block} since 
$(1-1/q^{d_1 b})^2 \le (1-1/q^{d_2 b})^2 $.
\end{proof}

Recall the definition:
$\pmpmin(n) = \min(\{ \pmp(A) | A \in \F_q^{n \times n}\}).$ 
This is the worst case probability that an $n\times n$ matrix has minimal polynomial preserved by uniformly random projection to a $b\times b$ sequence.     
In view of the above lemmata, for the lowest probability of success we
must look to 
matrices with the maximal number of elementary divisors.  
Define $L_q(m)$ to be the number of monic irreducible polynomials of degree $m$ in $\F_q[x]$. 
By the well known formula of \citet{Gauss81},
$$L_q(m) = 1/m \sum_{d~|~m} \mu(m/d)q^{d},$$
where $\mu$ is the M\"obus function.  Asymptotically $L_q(m)$ converges to $q^m/m$.  By definition, 
$\mu(a) = (-1)^k$ for square free $a$ with $k$ distinct prime factors and $\mu(a) = 0$ otherwise.
The degree of the product of all the monic irreducible polynomials of degree $d$ is then $dL_q(d)$.  
When we want to have a maximal number of irreducible factors in a product of degree $n$,
we will use $L_q(1), L_q(2),\ldots, L_q(m-1)$ etc., until the contribution of $L_q(m)$ no longer fits within the degree $n$.  
In that case we finish with as many of the degree $m$ irreducibles as will fit.  
For this purpose we adopt the notation $$L_{q}(n,m) := \min\left(L_q(m),\left\lfloor 
\frac{r}
{m}\right\rfloor\right), \for 
r = n - \sum_{d=1}^{m-1} d L_q(d)
.$$

\begin{thm}
\label{thm:worst}
Let $\F = \F_q$ be the field of cardinality $q$. 
For the $m$ such that $\sum_{d=1}^{m-1} d L_q(d) \le n < \sum_{d=1}^m d L_q(d)$,
$$\pmpmin(n) = \prod_{d=1}^m (1-1/q^{db})^{2L_{q}(n,m)}.$$  
Let 
$r = n - \sum_{d=1}^{m} d L_{q}(m,d)$. When $r \equiv 0 ~(\mod~ m)$, the minimum occurs for 
those matrices whose elementary divisors
are irreducible (not powers thereof), distinct, and with 
degree as small as possible.  When 
$ r \not\equiv 0~(\mod~ m)$ the minimum occurs when the elementary divisors involve exactly the same
irreducibles as in the $r \equiv 0~(\mod~m)$ case, but with some elementary divisors being powers so that that the total degree is brought to $n$.
\end{thm}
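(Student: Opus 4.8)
The plan is to establish the two inequalities $\pmpmin(n)\le V$ and $\pmpmin(n)\ge V$ separately, where $V=\prod_{d=1}^{m-1}(1-1/q^{db})^{2L_q(d)}\cdot(1-1/q^{mb})^{2L_q(n,m)}$ is the asserted value and I write $N_d=L_q(d)$ for $d<m$ and $N_m=L_q(n,m)$ for the greedy multiplicities. Everything reduces, via theorem \ref{thm:prob1}, to the exact factorization: for any $A$ whose minimal polynomial has distinct irreducible factors $f_i$ of degrees $d_i$, with $s_i$ elementary divisors of $f_i$ at the top exponent, $\pmp(A)=\prod_i(1-Q_{q^{d_i},b,s_i}(0))$. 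Note that $L_q(n,m)=\lfloor r/m\rfloor$ here, since the defining inequalities for $m$ force $r<mL_q(m)$ and hence the cap $L_q(m)$ is inactive.

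For the upper bound (achievability) I exhibit a matrix $A^{*}$ realizing $V$: take every monic irreducible of each degree $d<m$ exactly once (as its companion matrix $C_f$), take $N_m$ of the degree-$m$ irreducibles once each, and absorb the leftover degree $r\bmod m$ (which is $<m$) by replacing one linear block $C_g$ with the single Jordan block $J_{g^{1+(r\bmod m)}}$. By the choice of $m$ this matrix has dimension exactly $n$, and since every primary component is a single block ($s_i=1$), theorem \ref{thm:prob1} with part 3 of theorem \ref{thm:block} gives $\pmp(A^{*})=\prod_d(1-1/q^{db})^{2N_d}=V$.

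For the lower bound I first discard the exponent dependence. By theorem \ref{thm:probcompanion}, each factor obeys $1-Q_{q^{d_i},b,s_i}(0)\ge(1-1/q^{d_ib})^2$, so writing $n_d$ for the number of distinct degree-$d$ irreducibles dividing $\minpoly(A)$ we obtain $\pmp(A)\ge\prod_d(1-1/q^{db})^{2n_d}$, where $0\le n_d\le L_q(d)$ and $\sum_d d\,n_d\le n$ (each irreducible contributes at least its degree to $n=\sum_{i,j}d_ie_{i,j}$). Setting $a_d=-2\log(1-1/q^{db})$, which is positive and strictly decreasing in $d$ since $q\ge2$, the problem becomes: maximize $\sum_d a_d n_d$ subject to $\sum_d d\,n_d\le n$ and $0\le n_d\le L_q(d)$. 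I claim the greedy choice $n_d=N_d$ is the maximizer, whence $\prod_d(1-1/q^{db})^{2n_d}\ge V$ and thus $\pmp(A)\ge V$ for every $A$.

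The heart of the argument, and the main obstacle, is this discrete optimization, which I settle by an exchange argument made delicate by the mismatch between item weights ($d$) and the budget. Let $\mathbf n$ be a maximizer and $d_0$ the smallest degree at which it differs from greedy. If $d_0<m$ then $n_{d_0}<N_{d_0}=L_q(d_0)$, so a degree-$d_0$ irreducible is available; a weight count gives leftover budget at least $d_0$, so either I add this item directly (when no higher degree is used) or I delete one item of some degree $d_1>d_0$ and insert the degree-$d_0$ item, in both cases keeping feasibility (since $d_0\le d_1$) and strictly increasing the objective (since $a_{d_0}>a_{d_1}$), contradicting maximality. Hence every maximizer agrees with greedy for all $d<m$; for the tail I use $\sum_{d\ge m}n_d\le\frac1m\sum_{d\ge m}d\,n_d\le r/m$, so $\sum_{d\ge m}a_dn_d\le a_m\lfloor r/m\rfloor=a_mN_m$, completing the optimality of greedy. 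Finally, reading off equality cases — $\pmp(A)=\prod(1-1/q^{d_ib})^2$ forces every $s_i=1$, and $\prod(1-1/q^{d_ib})^2=V$ forces the degree multiset to be the greedy one — yields the stated description of the extremal matrices: when $r\equiv0\pmod m$ no padding is possible, so the elementary divisors must be distinct irreducibles of smallest degree, and when $r\not\equiv0\pmod m$ the same irreducibles occur but some elementary divisors are proper powers, chosen to bring the dimension up to $n$.
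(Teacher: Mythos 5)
Your proposal is correct, and it establishes the same greedy characterization, but it is organized genuinely differently from the paper's own proof. The paper argues by local structural moves on a hypothetical minimizer: lemma \ref{lem:prob1} (a single Jordan block $J_{f^{e_1+\cdots+e_t}}$ is at least as bad as the direct sum $J_{f^{e_1}}\oplus\cdots\oplus J_{f^{e_t}}$), lemma \ref{lem:prob2} (replacing a power $f^e$ by $e$ distinct same-degree irreducibles lowers the probability), and lemma \ref{lem:prob3} (smaller degree is worse) let it assume the elementary divisors are distinct irreducibles of smallest possible degrees, after which the fill-with-degree-$m$-and-pad-exponents step is argued informally. You instead (i) relax via theorem \ref{thm:probcompanion}, $1-Q_{q^{d},b,s}(0)\ge(1-1/q^{db})^2$, to get $\pmp(A)\ge\prod_d(1-1/q^{db})^{2n_d}$ where $n_d$ counts distinct degree-$d$ irreducible factors, subject only to $0\le n_d\le L_q(d)$ and $\sum_d d\,n_d\le n$; (ii) solve the resulting knapsack-type problem rigorously by an exchange argument plus the tail estimate $\sum_{d\ge m}n_d\le\lfloor r/m\rfloor$; and (iii) exhibit an explicit witness matrix attaining the value. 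Your relaxation absorbs the content of lemmas \ref{lem:prob1} and \ref{lem:prob2} in one stroke (a repeated or raised-to-a-power irreducible merely wastes budget in the constraint $\sum_d d\,n_d\le n$), and step (ii) makes precise exactly the point the paper waves through (``adding any irreducibles of higher degree will \ldots reduce the total probability by a lesser amount''); the two-sided bound-plus-witness structure also nails the equality $\pmpmin(n)=V$ cleanly, since the paper never explicitly verifies achievability. What the paper's route buys is brevity, lemmas \ref{lem:prob1}--\ref{lem:prob3} having already been proved and being of independent interest. Two minor notes: you correctly read through the typos in the statement (the exponent should be $2L_q(d)$ for $d<m$ and $2L_q(n,m)$ only for $d=m$, and $r$ should be $n-\sum_{d=1}^{m-1}dL_q(d)$); and your equality-case discussion, like the paper's, tacitly uses strictness of $1-Q_{q^d,b,s}(0)>(1-1/q^{db})^2$ for $s\ge2$, which theorem \ref{thm:probcompanion} states only as a weak inequality (the strict version does hold, and the paper likewise invokes it only informally in its examples section).
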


\begin{proof}
Let $A \in \F_q^{n \times n}$ and let 
$f_1^{e_1},\ldots,f_t^{e_t}$ be irreducible powers equal to the invariant 
factors of $A$.  If $\pmp(A)$ is minimal, then by 
lemmas \ref{lem:prob1},\ref{lem:prob2},\ref{lem:prob3} we can 
assume that the $f_i$ are distinct and have as small degrees as possible.  
Since $\sum_{d=1}^{m-1} d L_q(d) \le n < \sum_{d=1}^m d L_q(d)$, this 
assumption implies that all irreducibles of degree less than $m$ have been 
exhausted.  

If additional polynomials of degree $m$ can be added to obtain an 
$n \times n$ matrix, this will lead to the minimal probability since adding 
any irreducibles of higher degree will, by theorem \ref{thm:block}, reduce the total 
probability by a lesser amount.  In this case all of the exponents, $e_i$ 
will be equal to one.  If $r$ is not 0, then an 
$n \times n$ matrix can be obtained by increasing some of the exponents, 
$e_i$, without changing the probability.  This, again by theorem \ref{thm:block}, will 
lead to a smaller probability than those obtained by removing smaller degree 
polynomials and adding a polynomial of degree $m$ or higher.
\end{proof}

\subsection{Approximations}

Theorem \ref{thm:worst} can be simplified using the approximations  $L_q(m) \approx q^m/m$ and
$(1-1/a)^a \approx 1/e$.
\begin{cor}\label{cor:worst}
For field cardinality $q$, matrix dimension $n$, and projection block dimension $b$,
$$\pmpmin(n) \approx e^{-\frac{2}{q^b}H_m},$$
where $H_m$ is the $m$-th harmonic number.
$\Box$
\end{cor}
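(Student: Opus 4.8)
The plan is to start from the exact product formula of Theorem~\ref{thm:worst} and push it through the two approximations advertised just before the corollary, namely $L_q(d)\approx q^d/d$ (Gauss's count) and $(1-1/a)^a\approx 1/e$. First I would replace each irreducible count by its asymptotic value and suppress the truncation at the top degree, treating the degree-$m$ contribution as if it were the full $L_q(m)\approx q^m/m$ rather than the clipped $L_{q}(n,m)$. This turns the exact expression into the cleaner approximation $\pmpmin(n)\approx\prod_{d=1}^{m}\bigl(1-1/q^{db}\bigr)^{2q^{d}/d}$, where the factor $(1-1/q^{db})^{2}$ is the single-block probability coming from part~3 of Theorem~\ref{thm:block} for a degree-$d$ irreducible, and the exponent counts the (now approximate) number of distinct such irreducibles.

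Next I would convert each factor into an exponential. Writing $a=q^{db}$ and viewing the exponent $2q^{d}/d$ as $\bigl(2q^{d}/(d\,q^{db})\bigr)$ copies of $a$, the approximation $(1-1/a)^{a}\approx 1/e$ gives each factor as roughly $\exp\!\bigl(-2q^{d}/(d\,q^{db})\bigr)$. Taking the product and passing to logarithms then yields
\[
\log \pmpmin(n)\;\approx\;-2\sum_{d=1}^{m}\frac{q^{d}}{d\,q^{db}} .
\]
From here the target form $e^{-\frac{2}{q^{b}}H_m}$ is reached by pulling the $q$-power out in front and recognizing the remaining $\sum_{d=1}^{m}1/d$ as the harmonic number $H_m$ defined in the surrounding text.

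The step I expect to be the main obstacle is precisely this last simplification of the sum. The summand $q^{d}/q^{db}$ equals $q^{-d(b-1)}$, whose dependence on $d$ is genuinely geometric, so collapsing $\sum_{d}\tfrac{1}{d}\,q^{-d(b-1)}$ into the advertised $\tfrac{1}{q^{b}}H_m$ is not a mere rearrangement; it requires committing to the regime in which the two approximations are used and carefully tracking the exponent bookkeeping, together with bounding the error introduced by the dropped truncation at degree $m$ and by the first-order use of $\log(1-x)\approx -x$. I would therefore spend the bulk of the argument justifying that the harmonic-weighted sum is adequately represented by $\tfrac{1}{q^{b}}H_m$ to the order claimed, treating the earlier substitutions as the routine, already-justified inputs.
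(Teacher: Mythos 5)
Your derivation is exactly the route the paper intends: the corollary carries no proof at all (it ends with $\Box$), and the only guidance is the sentence before it naming the two substitutions you perform, $L_q(d)\approx q^d/d$ and $(1-1/a)^a\approx 1/e$, applied to the product in theorem \ref{thm:worst}. Carried out faithfully, as you do (and suppressing the clipping of the degree-$m$ count, which is the benign part), these give
\[
\pmpmin(n)\;\approx\;\exp\Bigl(-2\sum_{d=1}^{m}\tfrac{1}{d}\,q^{-d(b-1)}\Bigr),
\]
and your suspicion about the final step is correct: this expression is \emph{not} $e^{-\frac{2}{q^{b}}H_m}$, and no amount of exponent bookkeeping will make it so. When $b=1$ every summand is $1/d$, so the exponent is $2H_m$, differing from the claimed $2H_m/q$ by a factor of $q$; when $b\ge 2$ the sum is dominated by its $d=1$ term, so the exponent is essentially $2/q^{b-1}=2q/q^{b}$, which agrees with the claimed $2H_m/q^{b}$ only in order of magnitude, and only when $q$ is comparable to $H_m$ (very small fields, which is indeed the regime the authors emphasize). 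Note also that the substitution $q^{-d(b-1)}\to q^{-b}$ implicit in the corollary is not even one-sided: it shrinks the $d=1$ term but enlarges every term with $d\ge 2$ once $b\ge 2$.

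So the ``main obstacle'' you flag is not a step you failed to clear; it is a gap in the corollary itself, which is a heuristic, order-of-magnitude statement rather than a consequence of the advertised approximations. Your proposal is correct and complete up to precisely the point where the paper stops being rigorous, and the plan to ``spend the bulk of the argument'' justifying the collapse of the geometric-harmonic sum to $\tfrac{H_m}{q^{b}}$ would not succeed, because that collapse is false as stated. The defensible conclusions are the displayed formula above (equivalently $e^{-2H_m}$ at $b=1$, and roughly $e^{-2/q^{b-1}}$ for $b\ge 2$), or, if one wants a bound in the paper's shape, the one-sided estimate obtained from $q^{-d(b-1)}\le q^{-(b-1)}$ for all $d\ge 1$, namely $\pmpmin(n)\gtrsim e^{-\frac{2}{q^{b-1}}H_m}$; the paper's $q^{b}$ in place of $q^{b-1}$ cannot be recovered. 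The right disposition of your proof attempt is therefore to keep your first two steps verbatim, state the resulting sum as the actual approximation, and record the claimed form only as a rough simplification valid for small $q$.
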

Also, 
for large primes, the formula of theorem \ref{thm:worst} simplifies quite a bit because there are plenty of small degree irreducibles.
In the next corollary we 
consider (a) the case in which there are $n$ linear irreducibles and (b) a situation in which the worst case probability will be defined by linear and quadratic irreducibles.
\begin{cor}\label{cor:simp}
For field cardinality $q$, matrix dimension $n$, and projection block dimension $b$,
if $q \geq n$ then $$\pmpmin(n) = (1-1/q^{b})^{2n} \approx e^{-2n/q^b}.$$  
If $n >q \geq n^{1/2}$ then 
$$\pmpmin(n) = (1-1/q^{b})^{2q}(1-1/q^{2b})^{n-q} \approx e^{-(2/q^{b-1} + (n - q)/q^{2b})}.$$  
\algend
\end{cor}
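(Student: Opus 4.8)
The plan is to specialize Theorem~\ref{thm:worst} to the two stated regimes by pinning down the value of $m$ (the largest irreducible degree appearing in the worst-case elementary divisor list) from the size of $q$ relative to $n$ and $n^{1/2}$. The only inputs needed are the count $L_q(1)=q$ of monic linear irreducibles, the total degree $2L_q(2)=q^{2}-q$ contributed by all monic quadratic irreducibles, and the fact (Theorem~\ref{thm:block}, Theorem~\ref{thm:prob1}) that each distinct irreducible of degree $d$ contributes a factor $(1-1/q^{db})^{2}$ to $\pmpmin(n)$, while replacing an irreducible by a power of itself leaves that factor unchanged.

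For part (a), assume $q\ge n$. Then $L_q(1)=q\ge n$, so there are at least $n$ distinct monic linear polynomials and the worst case of Theorem~\ref{thm:worst} is realized with $n$ distinct linear elementary divisors; equivalently $m=1$ when $n<q$, and when $n=q$ one checks $L_q(n,2)=0$ so no higher-degree factor appears. Either way Theorem~\ref{thm:worst} yields $\pmpmin(n)=(1-1/q^{b})^{2n}$. For part (b), assume $n^{1/2}\le q<n$, i.e. $q<n\le q^{2}$. Since $q<n$ the $q$ linear irreducibles no longer fill dimension $n$, so all of them are used, contributing $(1-1/q^{b})^{2q}$; and since $n\le q^{2}=q+2L_q(2)$ the quadratic irreducibles have enough total degree $q^{2}-q$ to complete the matrix, so $m=2$ (or $m=3$ with a vanishing degree-$3$ contribution in the boundary case $n=q^{2}$). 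By Theorem~\ref{thm:worst} the number of distinct quadratics used is $L_q(n,2)=\lfloor(n-q)/2\rfloor$, each contributing $(1-1/q^{2b})^{2}$, which gives the factor $(1-1/q^{2b})^{n-q}$ claimed (see the wrinkle below).

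Finally, both exponential estimates follow from the elementary approximation $(1-1/a)^{c}\approx e^{-c/a}$, applied with $a=q^{b},\,c=2n$ in part (a), and with $a=q^{b},\,c=2q$ together with $a=q^{2b},\,c=n-q$ in part (b), using $2q/q^{b}=2/q^{b-1}$. The one technical wrinkle, and the point I would treat most carefully, is the parity of $n-q$ in part (b): the exact count of distinct quadratics is $\lfloor(n-q)/2\rfloor$, so the clean exponent $n-q$ is attained exactly when $n-q$ is even, while for odd $n-q$ the remaining single degree must be absorbed by raising one linear elementary divisor to a power. By Theorem~\ref{thm:block} such a power leaves the probability unchanged, so the stated closed form is exact for even $n-q$ and serves as the intended approximation otherwise.
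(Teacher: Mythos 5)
Your proof is correct and is exactly the intended argument: the paper states this corollary without proof (it is the immediate specialization of Theorem~\ref{thm:worst}, using $L_q(1)=q$, $2L_q(2)=q^2-q$, and the fact that each distinct irreducible of degree $d$ contributes a factor $(1-1/q^{db})^2$ while raising an elementary divisor to a power leaves the probability unchanged), and you handle the boundary cases $n=q$ and $n=q^2$ correctly. Your parity observation is in fact a refinement of the paper: when $n-q$ is odd the exact worst-case value is $(1-1/q^{b})^{2q}(1-1/q^{2b})^{n-q-1}$, with the leftover degree absorbed by one elementary divisor that is a power, so the displayed equality in the second case is strictly exact only for $n-q$ even — a point the paper glosses over.
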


\subsection{Example Bound Calculations and Comparison to Previous Bounds}\label{examplebounds}

When $b = 1$ and we are only concerned with projection on one side, the first formula of corrolary \ref{cor:simp} simplifies to 
$(1-1/q)^n = (1 - n/q + \ldots)$. The bound given by Kaltofen and Pan \citep{KaPa91,Kaltofen:1991:SSLS} 
for the probability of $\minpoly(u\bar Av) = \minpoly(\bar Av)$ 
is the first two terms of this expansion, though developed with a very different proof.

For small primes,
\citet{Wiedemann86}(proposition 3) treats the case $b = 1$ and he fixes the projection on one side because he is interested in linear system solving and thus in the sequence $\bar Ab$, for fixed $b$.  For small $q$, his formula, $1/(6\log_q(n))$, computed with some approximation, is nonetheless quite close to our exact formula.  
However as $q$ approaches $n$ the discrepancy with our exact formula increases.
At the large/small crossover, $q = n$, Kaltofen/Pan's lower bound is 0, Wiedemann's is $1/6$, and ours is $1/e.$
The Kaltofen/Pan probability bound improves as $q$ grows larger from $n$.  The Wiedemann bound becomes more accurate as $q$ goes down from $n$.  
But the area $q \approx n$ is of some practical importance.
In integer matrix algorithms where the finite field used is a choice of the algorithm,
sometimes practical considerations of efficient field arithmetic encourages the use of primes in the vicinity of $n$.  For instance, exact arithmetic in double precision and using BLAS \citep{fflas08} works well with $q \in 10^6..10^7$.  Sparse matrices of order $n$ in that range are tractable.  Our bound may help justify the use of such primes.

\begin{figure}[H]
\centering
\includegraphics[width=0.8\textwidth,trim=1.4in 5.3in 1.4in 1in]{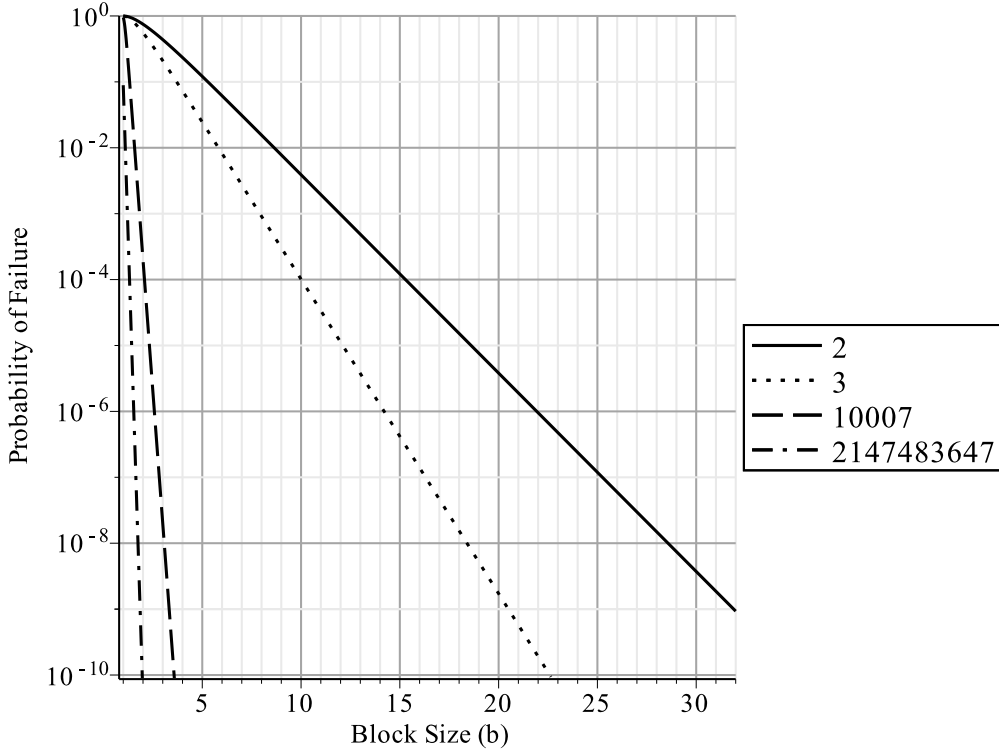}
\caption{Probability of Failure to Preserve Minimal Polynomial ($1-\pmpmin(10^8)$) 
vs Block Size and Field Cardinality}
\label{probability_plot}
\end{figure}

But the primary value we see in our analysis here is the understanding it gives of the value of blocking, $b > 1$.
Figure \ref{probability_plot} shows the bounds for the worst case probability that a random projection will preserve the minimal polynomial of a matrix $A \in \F_q^{10^8 \times 10^8}$ for various fields and projection block sizes.
It shows that the probability of finding the minimal polynomial correctly under projection converges rapidly to 1 as the projected block size increases.  

\section{Conclusion}\label{conclusion}
We have drawn a precise connection between the elementary divisors of a matrix and the probability that a random projection, as done in the (blocked or unblocked) Wiedemann algorithms, preserves the minimal polynomial.  
We provide sharp formulas both for the case where the elementary divisor structure of the matrix is known (theorem \ref{thm:minpolysum} and theorem \ref{thm:prob1}) and for the worst case (theorem \ref{thm:worst}).  
As indicated in figure \ref{probability_plot} for the worst case, a blocking size of 22 assures probability of success greater than $1 - 10^{-6}$ for all finite fields and all matrix dimensions up to $10^8$.  The probability decreases very slowly as matrix dimension grows and, in fact, further probability computations show
that the one in a million bound on failure applies to blocking size 22 with much larger matrix dimensions as well.
Looking forward, it
would be worthwhile to extend the analysis to apply to the determination of additional invariant factors.
Blocking is known to be useful for finding and exploiting them. For example, some rank and Frobenius form algorithms are based on block Wiedemann \citep{Eberly00b,Eberly00a}.  
Also, we have not addressed preconditioners.  
The preconditioners such as diagonal, Toeplitz, butterfly \citep{CEKTSV02}, either apply only for large fields or have only large field analyses.  
One can generally use an extension field to get the requisite cardinality, but the computational cost is high.  
Block algorithms hold much promise here and analysis to support them over small fields will be valuable.

\bibliographystyle{elsart-harv}

\end{document}